\newif\ifcomments
\newif\ifshort
\newif\ifconference
\newcommand{\sharon}[1]{\textcolor{purple}{Sharon: #1 }}
\newcommand{\raz}[1]{\textcolor{teal}{Raz: #1 }}
\newcommand{\neta}[1]{\textcolor{orange}{Neta: #1 }}
\newcommand{\oded}[1]{\textcolor{olive}{Oded: #1 }}
\newcommand{\sharon}[1]{}
\newcommand{\raz}[1]{}
\newcommand{\neta}[1]{}
\newcommand{\oded}[1]{}
\newcommand{\commentout}[1]{}
\spnewtheorem{constructor}[newcounter]{Constructor}{\bfseries}{\itshape}
\crefname{figure}{fig.}{figs.}
\crefname{example}{Ex.}{exs.}
\Crefname{example}{Ex.}{Exs.}
\newcommand{\orderformula}[0]{\ell}
\newcommand{\immutorder}[0]{\mathrm{order}}
\newcommand{\formula}[0]{\alpha}
\newcommand{\formulaa}[0]{\beta}
\newcommand{\term}[0]{t}
\newcommand{\domain}[0]{D}
\newcommand{\interp}[0]{I}
\newcommand{\struct}{s}
\newcommand{\structset}{\mathrm{struct}}
\newcommand{\seq}[1]{{\vec #1}}
\newcommand{\assign}{v}
\newcommand{\otherassign}{u}
\DeclareMathOperator{\assignset}{assign}
\newcommand{\concatvar}[0]{{\cdot}}
\newcommand{\concatfunc}[0]{{\cdot}}
\newcommand{\signature}[0]{\Sigma}
\newcommand{\axioms}[0]{\Gamma}
\newcommand{\init}[0]{\iota}
\newcommand{\tr}[0]{\tau}
\newcommand{\specification}[0]{(\axioms,\init,\tr)}
\newcommand{\specsub}[1]{({\axioms_{#1}},\init_{#1},\tr_{#1})}
\newcommand{\Tspec}[0]{\mathcal T}
\newcommand{\semantics}[0]{\mathcal S}
\newcommand{\inv}[0]{\theta}
\newcommand{\globally}[0]{\square}
\newcommand{\eventually}[0]{\lozenge}
\newcommand{\nextop}[0]{\bigcirc}
\newcommand{\until}[0]{\mathcal U}
\newcommand{\temporalformula}[0]{\psi}
\newcommand{\temporalformulaa}[0]{\rho}
\newcommand{\temporalformulaaa}[0]{\chi}
\newcommand{\prop}[0]{\varphi}
\newcommand{\timer}[0]{t}
\newcommand{\timerzeroaxiom}[0]{\zeta}
\newcommand{\timerzerotr}[0]{\eta}
\newcommand{\sorts}[0]{\mathrm{sorts}}
\newcommand{\sort}[0]{\sigma}
\newcommand{\timename}[0]{\mathrm{time}}
\newcommand{\timesort}[0]{\sort_\timename}
\newcommand{\timesemantics}[0]{\semantics_\timename}
\newcommand{\timersystem}[0]{\Tspec_{\prop}(\timesemantics)}
\newcommand{\nat}{\mathbb{N}}
\newcommand{\natinf}{\nat \cup \{ \infty \}}
\newcommand{\sub}[1]{A({#1})}
\newcommand{\rankname}[0]{\mathrm{Rk}}
\newcommand{\reduced}[0]{\tau_{\scriptscriptstyle \mathord{>}}}
\newcommand{\reducedsuper}[1]{\tau_{\scriptscriptstyle \mathord{>}}^{#1}}
\newcommand{\conserved}[0]{\tau_{\scriptscriptstyle \mathord{\geq}}}
\newcommand{\conservedsuper}[1]{\tau_{\geq}^{#1}}
\newcommand{\minformula}[0]{\kappa_{\textnormal{min}}}
\newcommand{\minsuper}[1]{\kappa_{\textnormal{min}}^{#1}}
\newcommand{\condition}[0]{\ensuremath{\mathcal C}}
\newcommand{\conditionsuper}[1]{\condition^{#1}}
\newcommand{\ranktuple}[0]{(\reduced,\conserved,\minformula)}
\newcommand{\ranktuplesup}[1]{(\reducedsuper{#1},\conservedsuper{#1},\minsuper{#1})}
\newcommand{\insup}[0]{\circ}
\newcommand{\bijec}[0]{\sigma}
\newcommand{\pair}[0]{(\struct,\assign)}
\newcommand{\pairlow}[0]{(\struct',\assign')}
\newcommand{\pairhigh}[0]{\pair}
\newcommand{\twopair}[0]{\pairhigh,\pairlow}
\begin{document}

\title{Verifying First-Order Temporal Properties of Infinite-State Systems via Timers and Rankings}
\author{
Raz Lotan
\inst{1,2}
\orcidlink{0009-0008-5883-5082} 
\and
Neta Elad
\inst{1}
\orcidlink{0000-0002-5503-5791} 
\and
Oded Padon
\inst{3}
\orcidlink{0009-0006-4209-1635} 
\and 
Sharon Shoham
\inst{1}
\orcidlink{0000-0002-7226-3526}
}

\institute{
Tel Aviv University, Tel Aviv, Israel
\and
Certora, Tel Aviv, Israel
\and
Weizmann Institute of Science, Rehovot, Israel\\
\email{lotanraz@gmail.com
}
}

\authorrunning{R. Lotan, N. Elad, O. Padon, S. Shoham}

\titlerunning{Verifying FO-LTL Properties via Timers and Rankings}

\maketitle

\begin{abstract} 
We present a unified deductive verification framework for first-order temporal properties based on well-founded rankings, where verification conditions are discharged using SMT solvers.
To that end, we introduce a novel reduction from verification of arbitrary temporal properties to verification of termination. 
Our reduction augments the system with prophecy timer variables that predict the number of steps along a trace until the next time certain temporal formulas, including the negated property, hold. In contrast to standard tableaux-based reductions, which reduce the problem to fair termination, our reduction does not introduce fairness assumptions.
To verify termination of the augmented system, we follow the traditional approach of assigning each state a rank from a well-founded set and showing that the rank decreases in every transition.
We leverage the recently proposed formalism of implicit rankings to express and automatically verify the decrease of rank using SMT solvers, even when the rank is not expressible in first-order logic.
We extend implicit rankings from finite to infinite domains, enabling verification of more general systems and making them applicable to the augmented systems generated by our reduction, which allows us to exploit the decrease of timers in termination proofs.
We evaluate our technique on a range of temporal verification tasks from previous works, giving simple, intuitive proofs for them within our framework. 
\end{abstract}

\section{Introduction}
\label{section:Introduction}
Several recent works~\cite{liveness_to_safety,prophecy,towards_liveness_proofs,ImplicitRankings,LVR} have dealt with verifying temporal properties of transition systems specified in first-order logic.
Temporal properties of such systems are naturally specified in First-Order Linear Temporal Logic (FO-LTL).
FO-LTL extends standard first-order logic with temporal operators that reason about the behavior of a transition system over time, 
and allows to specify both safety properties and liveness properties with various fairness assumptions.

Existing approaches for the verification of liveness properties of transition systems given in first-order logic can roughly be divided to those employing a liveness-to-safety reduction or those that use well-founded rankings.
The liveness-to-safety reduction of~\cite{liveness_to_safety} can target any FO-LTL property and is quite powerful when augmented with temporal prophecy~\cite{prophecy,power_of_temporal_prophecy}.
However, verifying the safety of the resulting system is complicated since it requires reasoning about a monitored copy of the system that tracks repeated states modulo a dynamic finite abstraction.
In contrast, well-founded rankings operate at the level of the original system
and are therefore more intuitive to use,
but current works in this setting~\cite{towards_liveness_proofs,ImplicitRankings,LVR} 
only target specific liveness properties and are limited in the kinds of ranking arguments they can capture.

Our goal is to design a  verification framework for first-order transition systems that is both intuitive for users and applicable to any temporal property. 
To this end, we combine two key ideas. 
First, to obtain a unified approach, we base our framework on a novel reduction from the verification of FO-LTL properties to verification of termination. 
Our reduction is sound and complete and produces a transition system that is easy to reason about since it is an augmentation of the original transition system with prophecy variables that track violations of the property.
Then, verifying termination of the resulting transition system is equivalent to showing that there is no violating trace of the original system.
Second, for verifying termination, we adopt and extend the notion of implicit rankings from~\cite{ImplicitRankings}, allowing the user to encode a range of well-founded rankings that may involve the prophecy variables introduced by the reduction.
Implicit rankings are provided by the user and verified automatically by an SMT solver. 

Our reduction to termination takes a transition system $T$ and a temporal property $\prop$ and reduces the verification of $T\models\prop$ to verifying termination of an augmented transition system obtained by a synchronous composition of $T$ with $T_{\neg\prop}$, the \emph{timer transition system} of $\neg \prop$. 
The infinite traces of $T_{\neg\prop}$ are exactly all the infinite traces that satisfy $\neg\prop$.
The timer transition system resembles standard tableaux constructions, except that it does not have fairness assumptions. 
As a result, we get a reduction to verification of termination instead of fair termination, which is easier to handle.
To construct $T_{\neg \prop}$ we introduce for each subformula 
$\temporalformula$ of $\neg \prop$ a timer variable $\timer_\temporalformula$ that takes values in $\natinf$, and predicts the number of transitions until $\temporalformula$ is next satisfied, or $\infty$ if there is no such number. This behavior is enforced by the transitions of $T_{\neg \prop}$.
The initial states of $T_{\neg \prop}$ are then defined as those where $\timer_{\neg\prop} = 0$ which means that all of their outgoing infinite traces satisfy $\neg\prop$. 
It follows that the composition $ T\times T_{\neg\prop}$ is an augmentation of $T$ with timers that contains all infinite traces of $ T$ that satisfy $\neg \prop$. Thus, $ T\models\prop$ holds if and only if 
the augmented transition system obtained by the composition has no infinite traces, that is, terminates.

To show termination of the transition systems produced by the reduction we employ the method of implicit rankings~\cite{ImplicitRankings}. 
Implicit rankings are first-order formulas that capture the decrease of some ranking function between two states, without defining the function explicitly. 
Implicit rankings are useful for reasoning about ranking functions that are not easily expressible in first-order logic, such as functions that involve unbounded set cardinalities or summations.
The constructors of implicit rankings defined in~\cite{ImplicitRankings} provide an expressive language for well-founded rankings that can be reasoned about in first-order logic, using SMT solvers, but these are defined only for systems that operate over finite, albeit unbounded, domains. 
In particular, they cannot leverage timers in rankings, even though their domain is well-founded.
To mitigate this, we extend the formalism of implicit rankings 
to be able to soundly reason over systems with infinite domains.
We do so by introducing an explicit soundness condition that allows us, for example, to require that some dynamically updated set is finite in all reachable states or that some relation is well-founded.
We further generalize the constructors of~\cite{ImplicitRankings} to this setting.
In particular, this allows us to use the decrease of timers in the ranking used to prove termination of the augmented system produced by our reduction.
Importantly, the use of timers in the termination proof relies on their intuitive meaning, and does not require understanding the construction of the augmented system.

We implement our reduction and the constructions of implicit rankings in a deductive verification tool that takes as input a transition system specification and a first-order temporal property, as well as an implicit ranking (given by a composition of constructors) and supporting invariants, which may refer to the timer variables. Our tool uses these ingredients to generate verification conditions that capture the decrease of rank encoded by the implicit rankings in every reachable transition of the system augmented with timers.
It then verifies that the system satisfies the property by using an SMT solver to automatically discharge the verification conditions. 
We evaluate our tool by verifying a set of challenging examples from previous works which include liveness properties of distributed protocols and termination of programs. 
We give simple intuitive proofs for these examples within our framework.
\paragraph{Contributions.}
\begin{itemize}
    \item We introduce a reduction from the verification of arbitrary FO-LTL properties to verification of termination that augments the transition system with prophecy timer variables.
    \item We generalize the definition of implicit rankings and their constructors from recent work to settings with infinite domains.
    \item We combine the reduction to termination with implicit rankings to obtain a deductive verification framework where a user constructs an implicit ranking for the augmented system, and the decrease of rank in every step of the augmented system is verified automatically by an SMT solver.
    \item We implement and evaluate our approach on a set of transition systems and their properties from the literature. 
\end{itemize}
The rest of the paper is organized as follows:
\Cref{section:motivatingExample} outlines a motivating example;
\Cref{section:prelims} gives background on first-order logic and FO-LTL;
\Cref{section:timerReduction} introduces the reduction from FO-LTL to termination;
\Cref{section:rankings} presents the generalized definition of implicit rankings, the proof rule that uses them to prove termination and constructors for implicit rankings;
\Cref{section:evaluation} details our implementation and evaluation;
\Cref{section:relatedWork} discusses related work and concludes the paper.
\ifconference
We defer all proofs to the full version of the paper.
\else
We defer all proofs to \Cref{appendix:proofs}.
\fi
\ifconference
\paragraph{\bf Full Version.} The full version of the paper is available at~\cite{todo_arxiv}.
\else
\fi
\newpage
\section{Running Example}
\label{section:motivatingExample}

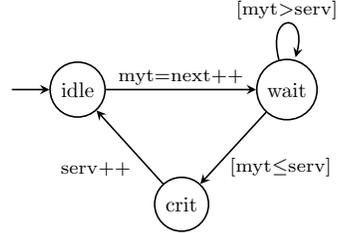
\begin{wrapfigure}{r}{0.35\textwidth}
\vspace{-0.5cm}
\centering
\begin{tikzpicture}[->, >=stealth, auto, semithick, node distance=2.5cm
]
\node[state, scale=0.8] (idle) {idle};
\node[state, scale=0.8] (wait) [right=2cm of idle] {wait};
\node[state, scale=0.8] (crit) [below=0.75cm of wait, xshift=-1.75cm] {crit};
\path (idle) edge node[yshift=-2pt] {$\scriptstyle
\mathrm{myt} = \mathrm{next++}$} (wait)
(wait) edge[loop above] node[yshift=-2pt]  {$\scriptstyle [\mathrm{myt} > \mathrm{serv}]$}  (wait)
(wait) edge node[xshift=-5pt] {$\scriptstyle [\mathrm{myt} \leq \mathrm{serv}]$} (crit)
(crit) edge node[xshift=4pt,yshift=-2pt]  {$\scriptstyle\mathrm{serv}++$} (idle);
\draw[->] ([xshift=-0.5cm]idle.west) -- (idle.west);
\end{tikzpicture}
\caption{Ticket Protocol}
\vspace{-0.3cm}
\label{figure:Ticket}
\end{wrapfigure}

To illustrate our approach we consider the ticket protocol for ensuring mutual exclusion, a variant of Lamport's bakery algorithm~\cite{lamportBakery}.
In the ticket protocol, access to the critical section is governed by natural-number tickets.
The protocol maintains the next available ticket and the currently serviced ticket, both initialized to zero.
Threads start at an idle state and may move to a waiting state by acquiring the next available ticket and in turn increasing it.
Once in the waiting state, a thread must wait for its ticket to be the currently serviced ticket  to enter the critical section. 
Immediately after a thread enters the critical section, it transitions back to the idle state while increasing the currently serviced ticket.
The control-flow graph of each thread is presented in \Cref{figure:Ticket}.

The temporal property we verify is that under fair scheduling, every thread that waits to enter the critical section, eventually does so.
We formalize this property in FO-LTL by
$ 
\prop = 
(\forall x \ \globally\eventually \mathrm{scheduled}(x))
\to 
(\forall x \ \globally (\mathrm{waiting}(x)\to \eventually \mathrm{critical}(x)))
$ where $\globally$ stands for ``globally'' and $\eventually$ stands for ``eventually''.
To verify the property, we need to show that no infinite trace of the system satisfies the negation of $\prop$.
We do so by considering an augmented transition system that restricts the infinite traces of the original system to traces that satisfy the skolemized, negated property
$
\mathrm{sk}(\neg \prop) = 
\forall x. \globally\eventually \mathrm{scheduled}(x) \wedge \eventually (\mathrm{waiting}(x_0)\wedge  \globally \neg \mathrm{critical}(x_0))$.
Verification is then reduced to showing that the augmented transition system has no infinite traces, i.e., that it terminates.

The augmented transition system is constructed automatically by composing the original one with a timer transition system for $
\mathrm{sk}(\neg \prop)$. 
The general construction is given in \Cref{subsection:constructors}. Here, we only illustrate the gist of the construction. 
The augmented transition system has a timer variable $\timer_{\mathrm{starved}}\in \natinf$ and a  timer function $\timer_{\mathrm{sched}}: \mathrm{Thread}\to\natinf$.
The variable $\timer_{\mathrm{starved}}$ serves as a prophecy variable that predicts the number of steps until $\mathrm{waiting}(x_0)\wedge  \globally \neg \mathrm{critical}(x_0)$
will next hold. When $\timer_{\mathrm{starved}}$ reaches $0$ it means that $x_0$ is forever locked out of the critical section.
The function $\timer_{\mathrm{sched}}(x)$ predicts the number of steps until the thread $x$ is next scheduled.
The construction ensures that the timers are updated according to this meaning: if their value is positive, it decreases in every step; if it is infinite, it remains infinite; and if it is $0$, it is ensured that the formula holds.
Because we want to guarantee that $\mathrm{sk}(\neg \prop)$,
the augmented transition system initializes $\timer_{\mathrm{starved}}$ to some natural number, it then decreases in every step of the system, which, due to the nature of the natural numbers, means it becomes 0 after finitely many steps. 
Additionally, to guarantee fair scheduling, for every thread $x$, the construction ensures that $\timer_{\mathrm{sched}}(x)$ is equal to 0 infinitely often. It does so by initializing it to some natural number and resetting it to an arbitrary positive number whenever it reaches 0.
Altogether this guarantees that \emph{any} infinite trace of the augmented system satisfies $\mathrm{sk}(\neg \prop)$ (without any fairness assumptions).
It then remains to prove that the augmented system terminates.

In the remainder of the section we illustrate how timers are leveraged in the termination proof of the augmented system. We emphasize that to incorporate the timers in the proof, the user may rely on their intuitive meaning and need not worry about the construction of the augmented system.
Our termination proof is based on a ranking function that maps system states to a rank from some well-founded set that decreases in every transition. However, instead of encoding the function explicitly, we use an extended notion of implicit rankings~\cite{ImplicitRankings} which encode decrease in the rank between two states, and can be naturally defined using the constructors we provide (see \Cref{subsection:constructors}).
 Due to space constraints, we only present the underlying ranking function. Encoding it as an implicit ranking using our constructors is rather straightforward, given our extension that allows to incorporate timers in implicit rankings despite their infinite domain.

We construct a lexicographic ranking function by considering the possible transitions of the (augmented) system. 
The first component of the rank is $\timer_\mathrm{starved}$. This timer repeatedly decreases until it reaches 0, from which point on it remains 0, and $x_0$ is starved out of the critical section.
The next components of the rank establish that the remainder of the trace from that point on is finite. 
To that end, we observe that since $x_0$ is in the waiting state indefinitely, its ticket value
$\mathrm{myt}(x_0)$ never changes and therefore the difference $\mathrm{myt}(x_0) - \mathrm{serv}$ never increases.
When a thread leaves the critical state $\mathrm{serv}$ increases  and the difference decreases.
Therefore, we use the difference as the next component in the rank.
Next, when no thread is in the critical section we wait for one to enter. 
Therefore, the third component in the rank captures whether there is a thread in the critical section. 
Finally, the scheduled thread may neither be in nor enter the critical section.
To ensure decrease of rank in this case, we leverage the timers and introduce a fourth component of the rank that decreases according to $\timer_{\mathrm{sched}}(x)$ for the thread $x$ that holds the currently serviced ticket and is the next one to enter the critical section.
When the resulting rank is encoded with an implicit ranking, the decrease of rank in every  (reachable) step of the augmented transition system can be verified automatically by an SMT solver, ensuring termination.

\section{Preliminaries}
\label{section:prelims}

\paragraph{First-Order Logic.}
We use many-sorted first-order logic with equality. A signature $\signature$ consists of a set of sorts, denoted $\mathrm{sorts}(\signature)$, sorted relation, constant and function symbols.
Terms are constant symbols, variables 
or function applications.
We use boldface to denote a sequence of variables $\seq x$ or terms $\seq \term$.
Atomic formulas are $\term_1=\term_2$
or relation applications $R(\seq t)$.
Non-atomic formulas are built using the  connectives $\neg,\wedge,\vee,\to$ and sorted quantifiers $\forall,\exists$.
For a formula $\formula$ (or term), we write $\formula(\seq x)$ to denote that its free variables are contained in $\seq x$, and for a sequence of terms $\seq \term$ we write $\formula(\seq \term)$ for the simultaneous substitution $\alpha[\seq x \mapsto \seq \term]$.

First-order formulas are evaluated over pairs of structures and assignments. 
A $\signature$-structure is a pair $\struct =(\domain,\interp)$ where $\domain$ maps every sort $\sort$ to its domain $\domain(\sort)$, which is a non-empty set, 
and $\interp$ is an interpretation function for all symbols in $\signature$. 
We denote by $\structset(\signature,\domain)$ the set of all structures over $\domain$ and by $\structset(\signature)$ the class of all $\Sigma$-structures. 
A sorted assignment $\assign$ from 
sorted variables $\seq x$ to $\domain$ is a function that maps each variable $x_i$ of sort $\sort$ in $\seq x$ 
to an element of $\domain(\sort)$.
We denote the set of all  assignments from $\seq x$ to $\domain$ by $\assignset(\seq x,\domain)$.
For two assignments $\otherassign,\assign$ over disjoint sequences of variables $\seq x, \seq y$, we denote by $\otherassign\concatfunc\assign$ the assignment to $\seq x\concatvar \seq y$ defined by $\otherassign\concatfunc\assign(x_i) = \otherassign(x_i)$ and $\otherassign\concatfunc\assign(y_j) = \assign(y_j)$ for any $i,j$.
For a formula $\formula$ and a pair $(\struct,\assign)$ 
of a structure $\struct$ and an assignment $\assign$ to the free variables of $\alpha$ we write $(\struct,\assign)\models \formula$ to denote that $(\struct,\assign)$ satisfies $\formula$.

For a signature $\signature$ we denote by $\signature'$ a disjoint copy of $\signature$ defined by $\signature' = \{a' \mid a\in \signature\}$.
For a sequence of variables $\seq x = (x_i)_{i=1}^m$, we denote by $\seq x'$ the sequence $(x'_i)_{i=1}^m$.
For a formula $\formula$ (or term $t$) over $\signature$, we denote by $\formula'$ the formula over $\signature'$ obtained by substituting each symbol $a\in \signature$ with $a'\in \signature'$.
With abuse of notation we sometimes consider a $\signature$-structure $\struct=(\domain,\interp)$ as a $\signature'$-structure, or vice versa, by setting $\interp(a')=\interp(a)$.
Similarly, we sometimes consider an assignment to $\seq x$ as an assignment to $\seq x'$, or vice versa, by setting $\assign(x'_i)=\assign(x_i)$.
For a formula $\tr$ over $\signature \uplus \signature'$ with free variables $\seq x, \seq x'$ and structures $\struct,\struct'$ over a shared domain, we use the notation $(\struct,\assign),(\struct',\assign')\models \tr$ to denote that $\tr$ is satisfied by the structure-assignment pair where the domain is the shared domain of $\struct,\struct'$ and where  $\signature,\seq x$ are interpreted according to $(\struct,\assign)$ and $\signature',\seq x'$
are interpreted as in $(\struct',\assign')$. 
If $\tr$ is closed we omit the assignments and write $\struct,\struct'\models\tr$. 

\paragraph{Specifying Transition Systems in First-Order Logic.} 
A transition system over signature $\Sigma$, denoted $\Tspec(\semantics)$, is given by a pair of a first-order specification $\Tspec$ and an intended semantics $\semantics$. 
The first-order specification is given by $\Tspec = \specification$ where 
$\axioms$  is a closed formula over $\signature$ that acts as a background theory, 
$\init$ is a closed formula over $\signature$ that defines the initial states of the system 
and $\tr$ is a closed formula over $\signature\uplus\signature'$ that defines the possible transitions of the system. 
The intended semantics $\semantics$ is a class of $\signature$-structures.
States of $\Tspec(\semantics)$ are given by the $\signature$-structures in $\semantics$ that satisfy $\axioms$.
A trace of $\Tspec(\semantics)$ is given by a sequence of states $\pi=(\struct_i)_{i=0}^n$ where $n\in \nat\cup\{\infty\}$ such that all states in the sequence have the same domain, $\struct_0 \models \init$ and for every $i\leq n$ we have $\struct_i,\struct_{i+1} \models \tr$. 
A state $\struct$ is reachable if there is a finite trace $\pi = (\struct_i)_{i=0}^n$ for $n\in\nat$ such that $\struct_n = \struct$. A pair of states $\struct,\struct'$ is a reachable transition if $\struct$ is reachable and $\struct,\struct' \models \tr$.
We say that $\Tspec(\semantics)$ \emph{terminates} if it has no \emph{infinite} traces.

\paragraph{First-Order Linear Temporal Logic.}
We use FO-LTL to specify temporal properties of transition systems.
Given a first-order signature $\signature$,
FO-LTL formulas are defined similarly to  first-order formulas with the addition of temporal operators.
For simplicity we only consider the temporal operators `globally' $\globally$ and `eventually' $\eventually$, our approach can be easily extended to the operators 'next' $\nextop$ and 'until' $\until$.
The semantics of an FO-LTL formula $\temporalformula$ is defined over a pair $(\pi,\assign)$, where $\pi$ is an \emph{infinite} sequence of $\signature$-structures $\pi = (\struct_i)_{i=0}^\infty$ over a shared domain, and $\assign$ is an assignment to the free variables of $\temporalformula$.
For 
$\pi = (\struct_i)_{i=0}^\infty$ 
and $k\in\nat$ we denote 
$\pi^k = (\struct_{i+k})_{i=0}^\infty$.
The semantics of FO-LTL for $\temporalformula$ and $(\pi,\assign)$ as above is such that if $\temporalformula$ is atomic we have 
$\pi,\assign\models\temporalformula$ if and only if $\struct_0,\assign\models\temporalformula$ in the first-order semantics, 
$\pi,\assign\models\globally\temporalformula$ if and only if $\pi^k,\assign\models\temporalformula$ for every $k\in\nat$,
$\pi,\assign\models\eventually\temporalformula$ if and only if $\pi^k,\assign\models\temporalformula$ for some $k\in\nat$, and the first-order operators are interpreted in the usual way.
A temporal property $\prop$ is given by a closed FO-LTL formula.
A transition system $\Tspec(\semantics)$ satisfies a temporal property $\prop$, denoted $\Tspec(\semantics) \models \prop$, if all of its infinite traces satisfy $\prop$.

\section{Timers}
\label{section:timerReduction}
In this section we present the reduction from verification of FO-LTL properties to verification of termination.
First, in \Cref{subsection:timerSystem}, we present the timer transition system of a temporal property $\prop$, and prove two characteristic facts about it.
Then, in \Cref{subsection:reduction}, we present the reduction, which uses the timer transition system of the negated temporal property to be verified.

\subsection{The Timer Transition System}
\label{subsection:timerSystem}

The timer transition system of a temporal property $\prop$
is a transition system constructed such that (i)~all of its infinite traces satisfy $\prop$ and (ii)~it contains all infinite traces that satisfy $\prop$. 
We construct the timer transition system by considering a new sort, denoted $\timesort$, whose intended semantics
are of the extended natural numbers $\natinf$, 
and adding for every subformula $\temporalformula$ a function symbol $\timer_\temporalformula$ called the \emph{timer} of $\temporalformula$.

Formally, for an FO-LTL property $\prop$ over signature $\signature$ we denote by 
$\sub{\prop}$ the set of formulas that contains all subformulas of $\prop$, and for every subformula $\globally\temporalformula$, additionally contains $\neg\temporalformula$.
The timer transition system 
$\timersystem$ is defined over signature
$\signature_\prop$ with sorts $\sorts(\signature_\prop)= \sorts(\signature) \cup \{ \timesort\}$ and signature $\signature_\prop = \signature \cup 
\{ \timer_{\temporalformula} \mid \temporalformula\in \sub{\prop} \}
\cup 
\{ \leq, 0, \infty, -1\}$,
where $\timer_\temporalformula$ for $\temporalformula(\seq x) \in \sub{\prop}$ is a function symbol
from the sorts of $\seq x$ to $\timesort$;
$\leq$ is a binary relation symbol on $\timesort$;  $0$ and $\infty$ are constant symbols of sort $\timesort$; and 
$-1$ is a function symbol from $\timesort$ to $\timesort$.
The timer intended semantics $\timesemantics$ are given by the class of structures for $\signature_\prop$ where the domain of $\timesort$ is $\natinf$, with the standard interpretations for 
${\leq}, {0}, {\infty}$ and $-1$.
The axioms and transition formula of $\Tspec_\prop$ are defined such that for any $\temporalformula\in\sub{\prop}$, $\timer_\temporalformula(\seq x)$ captures the number of transitions required until $\temporalformula(\seq x)$ is satisfied.
Finally, the initial states formula ensures that $\timer_\prop = 0$ holds in all initial states, which will imply that $\prop$ is satisfied in every trace.

\begin{definition}
\label{definition:timerSystem}
Let $\prop$ be an FO-LTL property over signature $\signature$, and define $\sub\prop, \allowbreak\signature_\prop$ and $\timesemantics$ as above.
The \emph{timer transition system} of $\prop$ is a transition system $\timersystem$ over $\signature_\prop$,
where $\Tspec_\prop=(\axioms_\prop,\init_\prop,\tr_\prop)$ is defined by: 
\[
\begin{array}{rcll}
\axioms_\prop &=& 
\displaystyle
\bigwedge_{\temporalformula(\seq x) 
\in \sub{\prop}} &
\forall \seq x \
\left((\timer_\temporalformula(\seq x) = 0) \leftrightarrow \timerzeroaxiom_\temporalformula(\seq x)\right)
\qquad\qquad 
\init_\prop = (\timer_\prop = 0)\\
\tr_\prop &=& 
\displaystyle
\bigwedge_{\temporalformula(\seq x) \in \sub{\prop}} &
\forall \seq x
\left(
\begin{array}{ll}
     &
    (0 < \timer_\temporalformula(\seq x) <\infty \to \timer'_\temporalformula(\seq x)=\timer_\temporalformula(\seq x) -1) \ \wedge \\
    & 
    (\timer_\temporalformula(\seq x) = \infty \to \timer'_\temporalformula(\seq x) = \infty) \ \wedge \\
    & 
    (\timer_\temporalformula(\seq x) = 0) \leftrightarrow \timerzerotr_\temporalformula(\seq x)  
\end{array}
\right)
\end{array}
\]
    where $\timerzeroaxiom_\temporalformula$ and $\timerzerotr_\temporalformula$ are defined according to \Cref{table:timerSpecification}. If $\timerzerotr_\temporalformula$ is undefined, the conjunct that refers to it is omitted from $\tr_\prop$.
\begin{table}[t]
    \centering
    \caption{Definitions of 
$\timerzeroaxiom_\temporalformula,\timerzerotr_\temporalformula$
    for an FO-LTL formula $\temporalformula(\seq x)$
    }
    \scalebox{0.9}{
    \bgroup
    \def\arraystretch{1.5}
    \setlength\tabcolsep{5pt}
    \begin{tabular}{| c | c | c |}
\hline
$\temporalformula(\seq x)$ & $\timerzeroaxiom_\temporalformula$ & $\timerzerotr_\temporalformula$ \\
\hline 
$R(\seq x)$
&
$R(\seq x)$
&
\\
$\neg \temporalformulaa$
&
$ \timer_\temporalformulaa(\seq x) \neq 0$
&
\\
$\temporalformulaa \circ \temporalformulaaa$
for $\circ\in \{\wedge,\vee,\to\}$ &
$(\timer_{\temporalformulaa}(\seq x) = 0) \circ (\timer_{\temporalformulaaa}(\seq x) = 0)$
&
\\
$Q y \ \temporalformulaa$ for $Q\in \{\forall,\exists\}$
&
$Q y\ (\timer_{\temporalformulaa}(\seq x\concatvar y) = 0)$
&
\\
$\eventually \temporalformulaa$
& 
$\timer_\temporalformulaa(\seq x) < \infty$
&
$\timer_\temporalformulaa(\seq x) = 0 \vee t'_{\eventually \temporalformulaa}(\seq x)=0$
\\
$\globally \temporalformulaa$
& 
$\timer_{\neg\temporalformulaa}(\seq x)=\infty$
&
$\timer_\temporalformulaa(\seq x) = 0 \wedge t'_{\globally \temporalformulaa}(\seq x)=0$
\\

\hline
    \end{tabular}
    \egroup
    }
    \label{table:timerSpecification}
\end{table}
\end{definition}

\begin{example}
    Consider $\prop = \forall x. \globally\eventually \mathrm{sched}(x)$.
    The timers in $\Tspec_\prop$ are
    $\timer_{\forall x. \globally\eventually \mathrm{sched}(x)}, \allowbreak
    \timer_{\globally\eventually \mathrm{sched}(x)}(x), 
    \timer_{\neg\eventually \mathrm{sched}(x)}(x),
    \timer_{\eventually \mathrm{sched}(x)}(x), 
    \timer_{\mathrm{sched}(x)}(x)$.
    In initial states of $\Tspec_\prop$ we have $\timer_\varphi = 0$, and so from $\Gamma_\prop$ we have $\forall x. \timer_{\globally\eventually \mathrm{sched}(x)}(x) = 0$ in all initial states. 
    From $\tr_\prop$ it follows that $\forall x. \timer_{\eventually \mathrm{sched}(x)}(x)=0$ in all reachable states.
    From the axiom generated for $\eventually$, for any thread $x$, $\timer_{\mathrm{sched}(x)}(x)<\infty$ in any reachable state.
    If $\timer_{\mathrm{sched}(x)}(x)$, from decrease of timers in $\tr_\prop$ it follows that this timer decreases to 0 in finitely many steps.  
    Thus, for any thread $x$ we have infinitely often $\timer_{\mathrm{sched}(x)}(x) = 0$ in any trace, and so fair scheduling is guaranteed.
    That is, every infinite trace of $\Tspec_\prop$ satisfies $\prop$.
\end{example}

\begin{remark}
    The construction of the timer transition system resembles tableaux constructions~\cite{automata_approach,prophecy}. 
    Instead of using sets of temporal formulas as states, we augment states with timers. 
    In the tableau construction, fairness constraints are used to filter out undesirable traces, 
    such as infinite traces where all states are marked $\eventually p$ on the basis of some future satisfication of $p$, which never arrives.
    Instead, in our construction, we rely on the intended semantics of timers to prevent such cases.
    Since such traces induce infinitely decreasing timer values, the well-foundedness of the natural numbers ensures that they do not exist.    
\end{remark}
\ifshort
\else
\begin{remark}
    To handle any FO-LTL formula, we would consider additionally the temporal operators $\nextop$ and $\until$.
    To handle a temporal subformula $\nextop\temporalformulaa$, we would conjoin to $\tr_\prop$ the assertion $\timer_{\nextop\temporalformulaa} = 0 \leftrightarrow \timer_\temporalformulaa' = 0$.
    To handle a temporal subformula $\temporalformulaaa \until \temporalformulaa$, we would conjoin to $\Gamma_\prop$ the assertion $\timer_{\temporalformulaaa \until \temporalformulaa}=0 \to \timer_\temporalformulaa < \infty$, and conjoin to $\tr_\prop$ the assertion 
    $\timer_{\temporalformulaaa \until \temporalformulaa} = 0 \leftrightarrow \timer_\temporalformulaa = 0 \vee (\timer_\temporalformulaaa = 0 \wedge \timer_{\temporalformulaaa \until \temporalformulaa}' = 0)$.
\end{remark}
\fi
Our first lemma states that for any
$\temporalformula\in\sub{\prop}$
timers evaluated to $0$ capture
the satisfaction of the corresponding temporal formula.
In $\init_\prop$ we require that $\timer_\prop = 0$, so it follows that every infinite trace of
$\timersystem$ satisfies $\prop$.
\begin{lemma}
\label{lemma:timerTemporalEquivalence}
Let $\hat\pi = (\hat\struct_i)_{i=0}^\infty$ be a trace of $\timersystem$ with domain $\domain$, $\temporalformula(\vec x)\in \sub{\prop}$,  $\assign\in\assignset(\vec x,\domain)$ and $i\in\nat$. Then $\hat{\struct}_i,\assign \models (\timer_{\temporalformula}(\vec x) = 0)$ if and only if $\hat{\pi}^i,\assign \models \temporalformula(\vec x)$.
\end{lemma}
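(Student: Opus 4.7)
The plan is to prove the statement by structural induction on $\temporalformula \in \sub{\prop}$, after fixing the trace $\hat\pi$, position $i$, and assignment $\assign$. Throughout I exploit that $\axioms_\prop$ is satisfied at every state of $\hat\pi$ and that $\tr_\prop$ is satisfied by every consecutive pair, together with the induction hypothesis applied to strict subformulas at arbitrary positions of $\hat\pi$.

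For the atomic case $\temporalformula = R(\seq x)$, the axiom $\timerzeroaxiom_R$ directly gives $\timer_R(\seq x) = 0 \leftrightarrow R(\seq x)$ at $\hat\struct_i$, which matches the first-order semantics of $\temporalformula$ at $\hat\pi^i$. For negation, Boolean connectives, and quantifiers, the corresponding $\timerzeroaxiom_\temporalformula$ from \Cref{table:timerSpecification} reduces $\timer_\temporalformula(\seq x) = 0$ at $\hat\struct_i$ to the same Boolean combination or quantification of $\timer$-equalities for strict subformulas at $\hat\struct_i$, and the induction hypothesis rewrites each such equality as satisfaction of the corresponding subformula at $\hat\pi^i$.

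The interesting cases are the temporal operators, which combine the axioms with $\tr_\prop$ and the well-foundedness of $\nat$. For $\temporalformula = \eventually \temporalformulaa(\seq x)$: the ``if'' direction picks a least $k \geq i$ with $\hat\pi^k,\assign \models \temporalformulaa(\seq x)$; by IH $\timer_\temporalformulaa(\seq x) = 0$ at $\hat\struct_k$, and so by $\timerzeroaxiom_{\eventually\temporalformulaa}$ also $\timer_{\eventually\temporalformulaa}(\seq x) = 0$ at $\hat\struct_k$; the biconditional $\timerzerotr_{\eventually\temporalformulaa}$ in $\tr_\prop$ then propagates this backward one step at a time, yielding $\timer_{\eventually\temporalformulaa}(\seq x) = 0$ at $\hat\struct_i$. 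The ``only if'' direction uses $\timerzeroaxiom_{\eventually\temporalformulaa}$ to get $\timer_\temporalformulaa(\seq x) = n$ for some $n \in \nat$ at $\hat\struct_i$; the decrement clause of $\tr_\prop$ then forces $\timer_\temporalformulaa(\seq x) = 0$ at $\hat\struct_{i+n}$ by an easy induction on $n$, and IH delivers $\temporalformulaa$ at $\hat\pi^{i+n}$, hence $\eventually\temporalformulaa$ at $\hat\pi^i$. The case $\temporalformula = \globally\temporalformulaa(\seq x)$ is dual: the forward direction is a trace induction using $\timerzerotr_{\globally\temporalformulaa}$ together with IH. For the converse, assuming $\temporalformulaa$ at every $\hat\pi^k$ with $k \geq i$, IH combined with $\timerzeroaxiom_{\neg\temporalformulaa}$ yields $\timer_{\neg\temporalformulaa}(\seq x) \neq 0$ at every $\hat\struct_k$ with $k \geq i$; since any finite positive value at $\hat\struct_i$ would decrement to $0$ within finitely many steps while $\infty$ is preserved by $\tr_\prop$, we must have $\timer_{\neg\temporalformulaa}(\seq x) = \infty$ at $\hat\struct_i$, and then $\timerzeroaxiom_{\globally\temporalformulaa}$ delivers $\timer_{\globally\temporalformulaa}(\seq x) = 0$.

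The main obstacle lies in the temporal cases, which must bridge the pointwise axioms with the step-local transition formula. Each direction requires a secondary induction along the trace, or on the finite timer value, and the well-foundedness of $\nat$ is essential to rule out ``stuck positive forever'' scenarios for timers — this is precisely the role played by fairness constraints in tableaux-based reductions, here supplanted by the intended semantics of $\natinf$.
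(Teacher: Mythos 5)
Your proof is correct and follows essentially the same route as the paper's: structural induction over $\sub{\prop}$, with the atomic, Boolean and quantifier cases discharged by the axioms plus the induction hypothesis, the $\eventually$ case handled by forward timer decrement and backward propagation through the biconditional in $\tr_\prop$, and the $\globally$ case handled by forward trace induction and, for the converse, ruling out a finite value of $\timer_{\neg\temporalformulaa}$ by the decrement argument before applying $\timerzeroaxiom_{\globally\temporalformulaa}$. One cosmetic point: do not fix the assignment $\assign$ before the induction — the induction hypothesis must be available for arbitrary assignments (needed when extending $\assign$ with a value for the quantified variable), exactly as the lemma statement quantifies over $\assign$.
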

The following lemma states that any sequence of states that satisfies $\prop$ can be augmented to produce a trace of $\timersystem$. 
This essentially means that $\timersystem$ exhibits all the ways in which $\prop$ can be satisfied.
The proof of the lemma works by showing that we can always augment traces with the ``natural definition'' of timers, which is to evaluate $\timer_\temporalformula(\seq x)$ to the minimal number of transitions that should be taken until the suffix of the  sequence satisfies $\temporalformula$, or $\infty$ if none exists.

\begin{lemma}
\label{lemma:traceExtension}
Given a sequence of $\signature$-structures $\pi = (\struct_i)_{i=0}^\infty$ over a shared domain such that $\pi \models \prop$, there exists a trace $\hat\pi = (\hat\struct_i)_{i=0}^\infty$ of $\timersystem$ such that $\hat{\pi}|_\signature = \pi$.
\end{lemma}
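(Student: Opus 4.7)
The plan is to construct $\hat\pi$ explicitly by equipping each $\struct_i$ with the ``natural'' timer interpretation and then verifying the three trace conditions (initial state, axioms, transitions) one by one. Concretely, for every $\temporalformula(\seq x)\in\sub{\prop}$, every $i\in\nat$, and every $\assign\in\assignset(\seq x,\domain)$, define
\[
\hat\struct_i(\timer_\temporalformula)(\assign(\seq x)) \;=\; \min\{\,k\in\nat \mid \pi^{i+k},\assign \models \temporalformula(\seq x)\,\},
\]
with the convention that the minimum of the empty set is $\infty$. Then set $\hat\struct_i|_\signature = \struct_i$ and fix the standard interpretations of $\leq, 0, \infty, -1$ on $\natinf$; this immediately gives $\hat\pi|_\signature = \pi$.

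Next I would verify that $\hat\pi$ is a trace. The initial condition $\hat\struct_0\models \init_\prop$, i.e.\ $\timer_\prop = 0$, is immediate from the hypothesis $\pi\models \prop$ since it forces the minimum at $i=0$ to be $0$. For $\hat\struct_i\models \axioms_\prop$ I would do a case analysis over the syntactic shape of $\temporalformula$ using \Cref{table:timerSpecification}, each case reducing to a direct unfolding of LTL semantics on $\pi^i$: for atomic $R(\seq x)$ the equivalence is definitional; for $\neg\temporalformulaa$ and the Boolean connectives it follows from $\pi^i,\assign\models \temporalformulaa$ iff $\timer_\temporalformulaa(\seq x)$ evaluates to $0$; for quantifiers it follows by commuting the quantifier with the equality to $0$; for $\eventually\temporalformulaa$ the minimum is finite iff some future suffix satisfies $\temporalformulaa$; and for $\globally\temporalformulaa$ one uses that $\timer_{\neg\temporalformulaa}(\seq x)=\infty$ iff $\neg\temporalformulaa$ never holds from $i$ onward, i.e.\ $\pi^i,\assign\models\globally\temporalformulaa$ (here it is crucial that $\neg\temporalformulaa\in\sub{\prop}$, which is why the definition of $\sub{\prop}$ was extended).

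For $\tr_\prop$ between $\hat\struct_i$ and $\hat\struct_{i+1}$ I would use elementary properties of the minimum. If $0<\hat\struct_i(\timer_\temporalformula)(\assign(\seq x))=m<\infty$, then some $k\geq 1$ witnesses the minimum at $i$ and so $k-1$ is a candidate at $i+1$, giving value $\leq m-1$; conversely a strictly smaller candidate at $i+1$ would contradict minimality at $i$, so the value at $i+1$ is exactly $m-1$. The $\infty$-preservation clause is immediate, since no future suffix of $\pi^i$ satisfying $\temporalformula$ implies the same for $\pi^{i+1}$. The remaining conjuncts $(\timer_\temporalformula(\seq x)=0)\leftrightarrow \timerzerotr_\temporalformula(\seq x)$ only appear for $\eventually$ and $\globally$ and are precisely the standard LTL fixpoint equivalences $\eventually\temporalformulaa \equiv \temporalformulaa \vee \nextop\eventually\temporalformulaa$ and $\globally\temporalformulaa \equiv \temporalformulaa \wedge \nextop\globally\temporalformulaa$, which on the numeric side translate into the stated disjunction/conjunction over $\timer$ and $\timer'$.

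The one step that requires a bit of care, and which I regard as the main obstacle, is the case analysis for $\axioms_\prop$ on the $\globally$ clause, because it is routed through a timer $\timer_{\neg\temporalformulaa}$ for a formula that is not itself a subformula of $\prop$ in the usual sense; one has to check that $\neg\temporalformulaa$ lies in $\sub{\prop}$ by construction and that the natural-timer definition above applied to $\neg\temporalformulaa$ coincides with what the $\globally$ axiom demands. Everything else reduces to routine unfolding of the LTL semantics and basic arithmetic on $\natinf$.
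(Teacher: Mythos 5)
Your proposal is correct and follows essentially the same route as the paper's proof: augment each $\struct_i$ with the ``natural'' timer interpretation (the minimal number of steps until the suffix satisfies $\temporalformula$, or $\infty$), and then verify $\init_\prop$, $\axioms_\prop$, and $\tr_\prop$ by case analysis on the shape of $\temporalformula$ and elementary reasoning about minima over $\natinf$. Your extra attention to the $\globally$ case being routed through $\timer_{\neg\temporalformulaa}$ (which is in $\sub{\prop}$ by construction) matches the role this plays in the paper's argument as well.
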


\ifshort\else
\begin{remark}
While the proof of \Cref{lemma:traceExtension} shows that we can always augment traces with the natural definition of timers according to satisfaction of temporal formulas in the trace, \Cref{lemma:timerTemporalEquivalence} only states that traces of $\Tspec_{\prop}(\timesemantics)$ enforce the natural definition for timers evaluated to $0$. 
However, we can in fact show that the same holds for any timer value $k\in\natinf$: for a trace $\hat\pi = (\hat\struct_i)_{i=0}^\infty$ and assignment $\assign$, for $k\in \nat$ we have $\pi^i,\assign \models (t_\temporalformula(\seq x) = k)$ if and only if $ \pi^{i+k},\assign \models \temporalformula$ and $\pi^i,\assign \models (t_\temporalformula(\seq x) = \infty)$ if and only if  $\pi^{i+j},\assign  \nvDash \temporalformula$ for all $j\in \nat$. 
This holds because $\Tspec_{\prop}(\timesemantics)$ requires that timers evaluated to $\infty$ remain at $\infty$ after transitions.
Because we do not use this requirement in the proof of \Cref{lemma:timerTemporalEquivalence}, and removing it only adds traces to $\Tspec_{\prop}(\timesemantics)$ and hence does not affect \Cref{lemma:traceExtension}, we can omit this requirement from the definition of $\Tspec_{\prop}(\timesemantics)$ without affecting soundness and completeness of the reduction presented in the next subsection.
\end{remark}
\fi

\subsection{The Timer Reduction}
\label{subsection:reduction}

The reduction from satisfaction of FO-LTL properties to termination essentially composes $\Tspec(\semantics)$ with the timer transition system of the negated property $\Tspec_{\neg \prop}(\timesemantics)$. Formally, for two transition systems over signatures $\signature_1,\signature_2$ with specifications $\Tspec_1=\specsub{1}$, $\Tspec_2=\specsub{2}$ and intended semantics $\semantics_1,\semantics_2$ respectively, 
we define their \emph{product transition system} over $\signature_1\cup\signature_2$ where $\Sigma_1\cup\Sigma_2$ is not necessarily a disjoint union.
The product transition system is denoted $(\Tspec_1\times \Tspec_2)(\semantics_1\times\semantics_2)$ and defined such that $\Tspec_1\times \Tspec_2=(\axioms_1\wedge\axioms_2,\init_1\wedge\init_2,\tr_1\wedge\tr_2)$,  
and $\semantics_1\times \semantics_2$ contains any $(\Sigma_1\cup\Sigma_2)$-structure $\struct$ such that $\struct|_{\Sigma_1}\in\semantics_1$ and $\struct|_{\Sigma_2}
\in\semantics_2$.
In our case, $\Tspec(\semantics)$ and $\Tspec_{\neg \prop}(\timesemantics)$ are defined over signatures $\signature, \signature_\prop$ respectively such that $\signature \subseteq \signature_\prop$. 
Hence, the product transition system is defined over $\signature_\prop$ and can be understood as an augmentation of $\Tspec(\semantics)$ with the timer variables and semantics of $\Tspec_{\neg \prop}(\timesemantics)$. 
The product system $\Tspec(\semantics) \times \Tspec_{\neg \prop}(\timesemantics)$ contains those traces of $\Tspec(\semantics)$ that satisfy $\neg \prop$, it follows that $\Tspec(\semantics)$ satisfies the property if and only if the product system terminates.
The following theorem summarizes the soundness and completeness of the reduction to termination.

\begin{theorem}
\label{theorem:reductionSoundComplete}
For a transition system $\Tspec(\semantics)$ and an FO-LTL property $\prop$ we have
$\Tspec(\semantics)\models\prop$ if and only if $\Tspec(\semantics) \times \Tspec_{\neg \prop}(\timesemantics)$ terminates.
\end{theorem}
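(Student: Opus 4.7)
The plan is to prove both directions by contrapositive, leveraging the two lemmas from \Cref{subsection:timerSystem}. Throughout, I would rely on the fact that since $\signature \subseteq \signature_{\neg\prop}$, a $\signature_{\neg\prop}$-structure $\hat{\struct}$ restricts naturally to a $\signature$-structure $\hat{\struct}|_\signature$, and for any $\signature$-formula $\formula$ we have $\hat{\struct},\assign\models\formula$ iff $\hat{\struct}|_\signature,\assign\models\formula$ (and similarly for two-state formulas). Consequently, since $\Tspec\times\Tspec_{\neg\prop}$ is the componentwise conjunction, a trace $\hat{\pi}$ of the product projects to a trace $\hat{\pi}|_\signature$ of $\Tspec(\semantics)$ and lifts to a trace of $\Tspec_{\neg\prop}(\timesemantics)$.

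For the ($\Rightarrow$) direction I would assume the product has an infinite trace $\hat{\pi}=(\hat{\struct}_i)_{i=0}^\infty$. Projecting, $\pi = \hat{\pi}|_\signature$ is an infinite trace of $\Tspec(\semantics)$. Viewing $\hat{\pi}$ as a trace of $\Tspec_{\neg\prop}(\timesemantics)$, the initial condition $\init_{\neg\prop}$ gives $\hat{\struct}_0 \models (\timer_{\neg\prop}=0)$. Applying \Cref{lemma:timerTemporalEquivalence} with $\temporalformula=\neg\prop\in\sub{\neg\prop}$ and $i=0$ yields $\hat{\pi}\models\neg\prop$, and since $\neg\prop$ is a $\signature$-formula, $\pi\models\neg\prop$. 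This witnesses $\Tspec(\semantics)\not\models\prop$.

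For the ($\Leftarrow$) direction I would assume $\Tspec(\semantics)\not\models\prop$, so there exists an infinite trace $\pi=(\struct_i)_{i=0}^\infty$ of $\Tspec(\semantics)$ with $\pi\models\neg\prop$. By \Cref{lemma:traceExtension} applied to $\neg\prop$, there is a trace $\hat{\pi}=(\hat{\struct}_i)_{i=0}^\infty$ of $\Tspec_{\neg\prop}(\timesemantics)$ with $\hat{\pi}|_\signature=\pi$. Each $\hat{\struct}_i$ can be seen as a $(\signature\cup\signature_{\neg\prop})$-structure belonging to $\semantics\times\timesemantics$: its restriction to $\signature$ lies in $\semantics$ (since $\pi$ is a trace of $\Tspec(\semantics)$) and its restriction to $\signature_{\neg\prop}$ lies in $\timesemantics$ (since $\hat{\pi}$ is a trace of $\Tspec_{\neg\prop}(\timesemantics)$). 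By the restriction property, $\hat{\struct}_0$ satisfies $\init\wedge\init_{\neg\prop}$ and each consecutive pair satisfies $\tr\wedge\tr_{\neg\prop}$, so $\hat{\pi}$ is an infinite trace of the product, contradicting termination.

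The main obstacle, though essentially bookkeeping, is making the signature/domain alignment fully rigorous: one must check that the trace produced by \Cref{lemma:traceExtension} shares the common domain of $\pi$ on the $\signature$-sorts while supplying a $\timesort$-domain $\natinf$, so that the combined $(\signature\cup\signature_{\neg\prop})$-structures lie in $\semantics\times\timesemantics$ as required by the product's intended semantics. Once this is verified, both directions reduce to the componentwise nature of the product and a direct invocation of the respective lemma.
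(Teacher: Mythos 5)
Your proof is correct and follows essentially the same route as the paper: the paper's completeness direction is exactly your ($\Rightarrow$) argument (product trace, $\init_{\neg\prop}$ gives $\timer_{\neg\prop}=0$, then \Cref{lemma:timerTemporalEquivalence} and projection), and its soundness direction is your ($\Leftarrow$) argument via \Cref{lemma:traceExtension}. The only difference is that you spell out the signature/domain bookkeeping for membership in $\semantics\times\timesemantics$, which the paper leaves implicit.
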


\commentout{
\begin{remark}
    In principle, one can encode the transition system specification itself as part of the temporal property, and avoid the need to reason about the product system, by asserting $\init \circ \globally \tau$.
    In practice, it is useful to keep them separated to  keep the number of timer variables manageable.
\end{remark}
In practice, when relying on a first-order encoding we cannot capture the intended semantics of timers (and sometimes of the verified transition system) precisely, but only a weaker, approximate, semantics $\tilde\timesemantics$, for example, the first-order semantics.
If we consider such semantics, the reduction is no longer complete, but it is still sound, so we can use it for the purpose of verification.
We formalize this in the following claim.
\begin{claim}
Let $\tilde\timesemantics$ such that $\timesemantics\subseteq \tilde\timesemantics$ and $\tilde\semantics$ such that $\tilde\semantics\subseteq\semantics$, if $\Tspec(\tilde\semantics) \times \Tspec_{\neg \prop}(\tilde\timesemantics)$ terminates then 
$\Tspec(\semantics)\models\prop$.    
\end{claim}
}
\section{Rankings}
\label{section:rankings}

In this section we discuss how we extend and use 
the framework of implicit rankings from~\cite{ImplicitRankings} to prove termination of a transition system.
\Cref{subsection:definitionAndProofRule} gives the definition of implicit rankings, generalized such that they are parameterized by an explicit soundness condition on reachable states of the system, and shows how implicit rankings are used in termination proofs. \Cref{subsection:constructors} introduces recursive constructions of implicit rankings based on the constructors of~\cite{ImplicitRankings}, extended to produce soundness conditions, and demonstrates how timers are leveraged in the construction of implicit rankings.
In \Cref{subsection:smtEncoding} we show how we discharge the proof obligations for a termination proof based on implicit rankings using validity checks in first-order logic.

\subsection{Verifying Termination with Implicit Rankings} 
\label{subsection:definitionAndProofRule}

We first introduce our extended definition of \emph{closed} implicit rankings, 
which are those used for proofs.
We then give the definition of
implicit rankings,
that are not necessarily closed,
which can be used for recursive constructions.
For the remainder of the section, we fix a first-order signature $\signature$. To simplify the notation we assume that $\signature$ is single-sorted. The generalization to many-sorted signatures, which we use in practice, is straightforward.
A closed implicit ranking is a closed formula 
$\reduced$ 
that encodes the decrease of some implicit ranking function w.r.t.\ some implicit order.
The formula $\reduced$ is defined
over a double signature 
$\signature\uplus\signature'$,
where the rank decreases between the pre-state,
captured by $\signature$,
and the post-state,
captured by $\signature'$.
In~\cite{ImplicitRankings}, the implicit order is required to be well-founded.
Instead, we relax the requirement
and extend the notion of implicit rankings
to include a soundness condition $\condition$ on states,
such that well-foundedness of the implicit order
need only be 
guaranteed for the image of the ranking function on states that satisfy $\condition$.
We use the soundness condition to specify 
requirements that are necessary to guarantee well-foundedness of the implicit order but are not directly definable in first-order logic.
The soundness conditions we use include the requirement of finiteness of some set of elements and the requirement of well-foundedness of some state relation. 

\begin{definition} [Closed Implicit Ranking]
\label{definition:closedImplicitRanking}
Given a closed formula $\reduced$ over $\signature \uplus \signature'$ 
and a class of $\signature$-structures $\condition$, we say that $\reduced$
is a \emph{closed implicit ranking} with \emph{soundness condition} $\condition$ if for any domain $\domain$, there exist a partially ordered set $(A,<)$ and a function $f \colon \structset(\signature,\domain) \to A$, such that the following hold:
\begin{enumerate}
    \item\label{impicit-ranking-item1} for every $\struct,\struct'\in \structset(\signature,\domain)$ we have  $\struct,\struct' \models \reduced \implies f(\struct) > f(\struct')$
    \item $<\textnormal{ restricted to } \{ f(\struct) \mid \struct\in \condition \} \textnormal{ is well-founded}$.
\end{enumerate}  We call 
$(A,<)$ a \emph{ranking range} for $\domain$ and $f$ a \emph{ranking function} for $\domain$.
\end{definition}
We note that for using implicit rankings in termination proofs it suffices to require \cref{impicit-ranking-item1} (or even that $f$ is defined) only for structures in $\condition$, but this does not turn out to be helpful for constructing useful implicit rankings.

We use a closed implicit ranking $\reduced$ with soundness condition $\condition$ to show that a transition system $\Tspec(\semantics)$ terminates by showing that all reachable states of $\Tspec(\semantics)$ adhere to the soundness condition $\condition$ and that $\reduced$ is satisfied by all reachable transitions.
Termination follows since if the system had an infinite trace, that would imply an infinitely decreasing sequence in a well-founded order. 
We formalize this in the following theorem:

\begin{theorem}
\label{theorem:terminationProofRule}
If every reachable state $\struct$ of  $\Tspec(\semantics)$ satisfies $\struct\in\condition$ and every reachable transition  $(\struct,\struct')$ of $\Tspec(\semantics)$ satisfies $\struct,\struct'\models \reduced$ then $\Tspec(\semantics)$ terminates.
\end{theorem}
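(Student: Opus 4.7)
The plan is to prove the theorem by contradiction, extracting from a putative non-terminating run an infinite descending chain in the ranking range, and then using the soundness condition to contradict well-foundedness.

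First I would assume for contradiction that $\Tspec(\semantics)$ does not terminate, so that there exists an infinite trace $\pi = (\struct_i)_{i=0}^\infty$ of states in $\semantics$. All structures $\struct_i$ share a common domain $\domain$, by the definition of a trace. Every $\struct_i$ is reachable (witnessed by the finite prefix $(\struct_0, \dots, \struct_i)$), and every consecutive pair $(\struct_i, \struct_{i+1})$ is a reachable transition. Hence by the hypotheses of the theorem, $\struct_i \in \condition$ for all $i \in \nat$, and $\struct_i, \struct_{i+1} \models \reduced$ for all $i \in \nat$.

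Next I would instantiate \Cref{definition:closedImplicitRanking} at the domain $\domain$: this yields a partially ordered set $(A, <)$ and a ranking function $f \colon \structset(\signature, \domain) \to A$ satisfying the two clauses. Clause~\ref{impicit-ranking-item1}, applied to each $(\struct_i, \struct_{i+1})$, produces the strictly descending chain
\[
f(\struct_0) > f(\struct_1) > f(\struct_2) > \cdots
\]
in $A$. Since $\struct_i \in \condition$ for every $i$, each $f(\struct_i)$ lies in the subset $\{ f(\struct) \mid \struct \in \condition \}$, so the descending chain lives entirely within the set on which $<$ is required to be well-founded by clause~2. This contradicts well-foundedness and completes the proof.

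Nothing in this argument is delicate: the only point to be careful about is that the domain $\domain$ is fixed throughout the trace, so that a single ranking range and ranking function (obtained from the existential in \Cref{definition:closedImplicitRanking} for this particular $\domain$) applies uniformly to all $\struct_i$. I do not expect any real obstacle; the theorem is a direct instance of the standard well-founded-rank termination argument, with the soundness condition $\condition$ playing exactly the role of restricting the rank image to the well-founded part.
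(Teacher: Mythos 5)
Your proposal is correct and matches the paper's own proof essentially verbatim: both argue by contradiction, extract an infinite trace over a fixed domain $\domain$, instantiate \Cref{definition:closedImplicitRanking} at $\domain$ to obtain $(A,<)$ and $f$, and derive an infinite strictly descending chain inside $\{ f(\struct) \mid \struct \in \condition \}$, contradicting well-foundedness. No differences worth noting.
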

In \Cref{subsection:smtEncoding} we discuss how we verify the requirements of \Cref{theorem:terminationProofRule} in practice, before that, we present constructions of implicit rankings. To that end, we first generalize the definition.
Similarly to~\cite{ImplicitRankings}, we observe that we can construct implicit rankings by identifying useful primitives and composing them. 
To allow such recursive constructions  which use implicit rankings as building blocks for the construction of complex implicit rankings,
\cite{ImplicitRankings} generalizes the definition of closed implicit rankings in two ways. First, implicit rankings may have free variables, called \emph{parameters},
in which case the underlying (implicit) ranking function for each domain is defined over pairs of structures and assignments to the free variables.
Parameters are useful for constructing implicit rankings that capture aggregations over the domain.
Second, 
implicit rankings in~\cite{ImplicitRankings}
additionally include formulas $\conserved$ that approximate the weak partial order $\leq$ over the ranking range. This allows recursive constructions to rely on $\leq$ as well.
We add a third generalization, which is  a formula $\minformula$ that approximates whether an element is of minimal rank.
This generalization is used to define the soundness conditions of implicit rankings constructed via aggregations over infinite domains.

\begin{definition}[Implicit Ranking]
\label{definition:implicitRanking}
An \emph{implicit ranking} with \emph{parameters} $\seq x$ and \emph{soundness condition} $\condition$ is a triple
$\rankname=\ranktuple$
where $\reduced,\conserved$ are formulas over  $\signature\uplus\signature'$ with free variables ${\seq x},\seq {x'},\minformula$ is a formula over $\signature$ with free variables $\seq x$
and $\condition$ is a class of 
structures over $\signature$,
such that for any domain $\domain$,  there exist a partially ordered set $(A,<)$ and a function $f\colon \structset(\signature,\domain)\times \assignset(\seq x,\domain) \to A$, such that the following hold:
\begin{enumerate}
\item\label{implicit-item1} $\twopair\models \reduced(\seq {x},\seq {x'}) \implies  f\pairhigh>f\pairlow$ 
\item\label{implicit-item2} $\twopair\models \conserved(\seq {x},\seq {x'}) \implies f\pairhigh\geq f\pairlow$
\item\label{implicit-item3} $\pairhigh\models \minformula \implies f\pairhigh \textnormal{ is minimal in } (A,<)$
\item $ < \textnormal{ restricted to } \{ f\pair \mid \struct\in \condition, \assign\in\assignset(\seq x,\domain) \} \textnormal{ is well-founded}$.
\end{enumerate}    
where conditions \ref{implicit-item1},\ref{implicit-item2},\ref{implicit-item3} range over any $\pairhigh,\pairlow\in \structset(\signature,\domain)\times\assignset(\seq x,\domain)$.
We call 
$(A,<)$ a \emph{ranking range} for $\domain$ and $f$ a \emph{ranking function} for $\domain$.
\end{definition}
As with closed implicit rankings, we can weaken the requirements by only requiring conditions \ref{implicit-item1},\ref{implicit-item2},\ref{implicit-item3} for $\struct,\struct'\in \condition$, but this does not turn out to be useful.

\subsection{Constructors of Implicit Rankings}
\label{subsection:constructors}

We revisit the constructors of~\cite{ImplicitRankings}.
For each, we additionally specify the definition of $\condition$ and $\minformula$ for the resulting implicit ranking. 
Moreover, we leverage the soundness condition for implicit rankings as defined in this work to generalize the finite-domain constructors of~\cite{ImplicitRankings}, which are only sound for sorts with finite domains, to allow weaker finiteness assumptions.
\ifconference
Due to space constraints, we present only a subset of the constructors that are used to demonstrate the use of timers in ranking and those with non-trivial soundness conditions, all constructors appear in \cite{todo_arxiv}.
\else
We present only a subset of the constructors that are used to demonstrate the use of timers in ranking and those with non-trivial soundness conditions, we defer the rest of the constructors to \Cref{appendix:constructors}.
\fi

\begin{theorem}
\label{theorem:constructors}
All the constructors presented are sound, in the sense that whenever their arguments satisfy their assumptions, they output an implicit ranking with suitable soundness conditions.
\end{theorem}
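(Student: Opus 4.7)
The plan is to establish soundness by a case analysis on the constructors, handling each one independently: for each constructor, I will exhibit an explicit ranking range $(A,<)$ and ranking function $f$ on $\structset(\signature,\domain)\times\assignset(\seq x,\domain)$ that witnesses all four clauses of \Cref{definition:implicitRanking}. Since the constructors are defined recursively, the argument is by induction on the construction, using the inductive hypothesis that each sub-ranking already comes with a ranking range $(A_i,<_i)$ and ranking function $f_i$ satisfying \cref{implicit-item1,implicit-item2,implicit-item3} together with well-foundedness of $<_i$ on the image of $f_i$ restricted to states satisfying $\condition_i$.

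For the base (leaf) constructors I would describe their ranges directly: e.g., for a boolean-valued building block, $(A,<)=(\{0,1\},<)$, and the corresponding formulas $\reduced,\conserved,\minformula$ translate literally into comparisons of $f$. For the timer-based constructors, the range is $(\natinf,<)$, $\minformula$ is $(\timer_\temporalformula(\seq x) = 0)$, and well-foundedness comes from well-foundedness of $<$ on $\nat$; here the soundness condition $\condition$ excludes states with $\timer_\temporalformula(\seq x) = \infty$ in the image, or else we restrict $f$ to values in $\nat$ by using $\minformula$ to handle the $\infty$ case (which would contradict strict decrease, so $\reduced$ must exclude it). For the lexicographic and pointwise composition constructors, the range is the product of the constituent ranges with the lex/product order, $f$ is the tuple of the constituent functions, and the conditions $\reduced,\conserved,\minformula$ of the composition are the natural lex/pointwise combinations of the constituent formulas; well-foundedness is preserved by finite products under these orders, and the soundness condition $\condition$ is the intersection of the constituent soundness conditions.

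The main obstacle will be the aggregation constructors (sums, maxima, or set-cardinality-style constructors) taken over a sort whose domain may now be infinite. In \cite{ImplicitRankings} these were sound because the domain was finite; in our generalization, soundness must be recovered by the soundness condition $\condition$, which should assert finiteness of the relevant set (or well-foundedness of the relevant relation). Concretely, for an aggregation $\bigoplus_{y} \rankname(\seq x, y)$ I will set $(A,<)$ to be the multiset extension (or sum) of the constituent ranking range, $f$ to aggregate $f_{\mathrm{sub}}(\struct,\assign\concatfunc[y\mapsto d])$ over $d$ in the relevant subset of $\domain$, and include in $\condition$ both the constituent soundness condition pointwise and the finiteness/well-foundedness assumption needed to make the aggregation well-defined and its order well-founded. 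The clauses \ref{implicit-item1}--\ref{implicit-item3} then follow from the corresponding clauses of the sub-ranking applied elementwise, using $\minformula$ of the sub-ranking to identify the elements that do not contribute.

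Finally, for constructors whose correctness in~\cite{ImplicitRankings} is already established, the only new obligations are (i) verifying the additional $\minformula$ clause, which I read off from the explicit description of $f$ in each case, and (ii) verifying that the soundness condition $\condition$ I specify is indeed sufficient for the well-foundedness clause. For closure conditions (e.g., that $\condition$ of a composite is the intersection of constituent $\condition$'s plus any finiteness/well-foundedness side condition introduced by the new constructor), I will package this uniformly so that the inductive step is routine. This way, the bulk of the proof reduces to checking per-constructor algebraic identities, and the only delicate step is the justification of well-foundedness under the weakened finiteness assumptions, which is exactly where the explicit soundness condition pays off.
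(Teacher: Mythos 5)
Your overall strategy---induction on the construction, exhibiting for each constructor an explicit ranking range and ranking function and checking the four clauses of \Cref{definition:implicitRanking}---is the same as the paper's, and your treatment of the leaf and finite lex/pointwise composition constructors is fine (a minor quibble: for $\mathrm{Pos}$ over timers no special handling of $\infty$ is needed, since $\natinf$ with the usual order is already well-founded; the soundness condition is simply well-foundedness of $\orderformula$ in the structure). The gap is in the domain aggregation constructors, which are where the new content of the theorem lives. For \Cref{constructor:domainPointwise} your multiset-style range can be made to work (the paper instead uses the pointwise order on functions $Y\to P$ restricted to those with finitely many non-minimal values), but in either case you still owe the actual well-foundedness argument under the weakened hypothesis: the paper proves that this restricted pointwise order is well-founded by a pigeonhole argument on the finitely many indices at which the first function in a hypothetical descending chain is non-minimal. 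You only assert that ``the soundness condition pays off'' without supplying this argument.

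More seriously, your plan does not engage with \Cref{constructor:domainLexicographic}, and the multiset extension cannot substitute for the right range there: the formula $\conserved$ of that constructor allows the sub-rank at a position $y$ to \emph{increase} arbitrarily, provided some $\orderformula$-larger position $y_0$ strictly decreases, and the increased value at $y$ need not be dominated by any decreased value---so the multiset order over $P$ does not decrease and clause~\ref{implicit-item2} fails for a multiset-valued $f$. The range actually needed is the reverse lexicographic order on functions $Y\to P$, where $(Y,<_Y)$ is the (state-dependent, hence carried inside the ranking range) interpretation of $\orderformula$, restricted to functions with finitely many non-minimal values; the crux is then proving that this order is well-founded when $Y$ may be infinite. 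That lemma is non-trivial: the paper imitates the Manna--Dershowitz argument, building an infinite, finitely-branching tree from a hypothetical descending chain and invoking K\H onig's lemma to extract a descending sequence in the well-founded lexicographic product $Y\times P$, yielding a contradiction. The ``delicate step'' you defer is exactly this lemma (and its pointwise analogue), so as written the proposal leaves the substance of the theorem unproved.
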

The first constructor  we present relies on an  existing order over domain elements in the system itself to define an implicit ranking whose underlying ranking function decreases with respect to this order.
The order is specified by a relation symbol $\orderformula( y_1,y_2)$, which is required to satisfy the formula $\immutorder(\orderformula)$
that specifies that $\orderformula$ is immutable and defines a strict partial order.
The soundness condition of this implicit ranking is that the order defined by $\orderformula$ is well-founded.
We require well-foundedness as a soundness condition
as it is not possible to specify in first-order logic.
While this may seem like deferring the problem rather than solving it, we are often able to leverage the intended semantics
of the transition system,
for example the intended semantics of timers
$\timesemantics$,
to ensure that the soundness condition holds without explicitly encoding it (see \Cref{subsection:smtEncoding}).

\begin{constructor}
The \emph{position constructor} takes a term $\term$ over $\signature$ with free variables $\seq x$ 
and a binary relation symbol $\orderformula \in \signature$, returns an implicit ranking $\mathrm{Pos}(t,\orderformula)=\ranktuple$ with parameters $\seq x$ and soundness condition $\condition$ defined by:
\begin{align*}
    &{\reduced}(\seq x,\seq x') =   \immutorder(\orderformula) \wedge \orderformula(\term'(\seq x'),\term(\seq x))\\
    & {\conserved}(\seq x,\seq x') = \immutorder(\orderformula) \wedge (\orderformula( \term'(\seq x'),\term(\seq x))\vee \term'(\seq x')=\term(\seq x))\\
    & \minformula(\seq x)=\forall y. \neg \orderformula(y ,t(\seq x) ) \qquad 
    \condition = \{ \struct \in \structset(\signature) \mid \orderformula \textnormal{ is well-founded in } \struct \}
\end{align*}
\end{constructor}

\begin{example}
\label{example:examplePos}
Consider the ticket protocol of \Cref{section:motivatingExample}, after having augmented the system with timers. 
We use the timers in implicit rankings with the position constructor as follows.
The relation $\orderformula$ is the order relation 
$<$ on $\timesort$.
Taking $\term = \timer_{\mathrm{waiting}(x_0)\wedge \globally\neg\mathrm{critical}(x_0)}$, we get the closed implicit ranking capturing the decrease of  $\timer_{\mathrm{starved}}$.
Taking $\term(x) = \timer_{\globally\eventually\mathrm{scheduled}(x)}(x)$ we get an implicit rankings with parameter $x$, which needs to be aggregated to get a closed implicit ranking.  
\end{example}
The next constructor takes as input an implicit ranking $\rankname^\insup$ and a formula $\formula$ and returns an implicit ranking that refines the given implicit ranking such that states are ranked based $\rankname^\insup$ only when $\formula$ is satisfied, and states that do not satisfy $\formula$ are ranked lower than those that do. 
It follows that every state that does not satisfy $\alpha$ is minimal according to this ranking.
This constructor is useful when we can only guarantee conservation ($\leq$) of $\rankname^\insup$ when $\formula$ is satisfied.
\begin{constructor}
\label{constructor:conditional} The \emph{conditional constructor} takes an implicit ranking $\rankname^\insup=\ranktuplesup{\insup}$ with parameters $\seq x$ and soundness condition $\conditionsuper{\insup}$, and a formula $\formula(\seq x)$. 
It returns an implicit ranking $\mathrm{Cond}(\rankname^\insup,\formula)=\ranktuple$ with parameters $\seq x$ and soundness condition $\condition$ defined by:
\begin{align*}
 &{\reduced}(\seq x,\seq x') =
 (\formula(\seq x) \wedge \neg\formula'(\seq x')
 ) \vee
 (\formula(\seq x)\wedge \formula'(
\seq x'
 ) 
 \wedge
 {\reducedsuper{\insup}}(\seq x,\seq x')
 ) 
\\
&\conserved(\seq x,\seq x')=
 \neg \formula'(\seq x') \vee
 (\formula(\seq x)\wedge \formula'(
\seq x') 
 \wedge
 {\conservedsuper{\insup}}(\seq x,\seq x')
 ) \quad \minformula(\seq x)=\neg\formula(\seq x) \quad \condition = 
\conditionsuper{\insup}
\end{align*}
\end{constructor}

\begin{example}
\label{example:CondRank}
    In \Cref{example:examplePos} we show how to use the timer $\timer_{\globally\eventually\mathrm{scheduled}(x)}(x)$ in implicit rankings.
    In the ticket protocol every thread is scheduled infinitely often so the timer above is not always conserved. 
    Moreover, a thread being scheduled does not necessarily change the state, for example, when it tries to enter the critical section but does not hold $\mathrm{serv}$.
    For this purpose we want to consider the timer above only for threads that hold the currently served ticket.
    This is captured by the implicit ranking $\mathrm{Cond}(\mathrm{Pos}(\timer_{\globally\eventually\mathrm{scheduled}(x)}(x),<),\mathrm{myt}(x)=\mathrm{serv})$.
\end{example}
We continue by reintroducing the domain-based constructors of~\cite{ImplicitRankings}. 
These aggregate rankings that are parameterized by domain elements.
In~\cite{ImplicitRankings} these were defined to be sound only for finite domains. 
Here, we slightly weaken this assumption and instead introduce it as part of the soundness condition, requiring finiteness only of the set of non-minimal elements in the base ranking.
Since minimal values cannot decrease, having finitely-many non-minimal values is sufficient to maintain well-foundedness of the aggregated ranking. 

The domain-pointwise constructor lifts an implicit ranking $\rankname^\insup$ with ranking range $(P,<_P)$ to an implicit ranking whose ranking range is the set of functions $Y\to P$ where $Y$ is possibly infinite.
The set $Y\to P$ is ordered by the pointwise ordering: $a <_{\text{pw}} b$ if and only if there exists $y\in Y$ such that $a(y)<_P b(y)$ and for every $y\in Y$ we have $a(y)\leq_P b(y)$. If $<_P$ is well-founded, we can restrict the produced partial order $<_\text{pw}$ to be well-founded even if $Y$ is infinite by only considering functions $a\in Y\to P$ that have finitely many $y\in Y$ such that $a(y)$ is non-minimal. 
This is captured by the soundness condition of the constructor.

\begin{constructor}
\label{constructor:domainPointwise}
The \emph{domain-pointwise constructor} takes an implicit ranking
$\rankname^\insup = \ranktuplesup{\insup}$
with parameters $\seq x = \seq y\concatvar\seq z$ and soundness condition $\conditionsuper{\insup}$.
It returns an implicit ranking $\mathrm{DomPW}(\rankname^{\insup},\seq y) = \ranktuple$ with parameters $\seq z$ and soundness condition $\condition$ defined by:
\begin{align*}
&{\reduced}(\seq z,\seq z') = {\conserved}(\seq {z}, \seq {z'}) \wedge
(\exists \seq y. {\reducedsuper{\insup}}( \seq y \concatvar \seq {z}, \seq y \concatvar \seq {z'} ))\\
&{\conserved}(\seq z,\seq z') = \forall {\seq y}. \conservedsuper{\insup}(\seq y \concatvar \seq {z}, \seq y \concatvar \seq {z'}) 
\qquad\qquad  
\minformula(\seq z) = \forall \seq y. \minsuper{\insup}(\seq y \concatvar \seq z)\\
&{\condition} = \{  \struct \in \structset(\signature) \mid  \struct\in \conditionsuper{\insup} 
\textnormal{ and } \forall \assign\in\assignset(\seq z).\ \mathrm{finite}_{\seq y}(\neg \minsuper{\insup}(\seq y\concatvar\seq z),\struct,\assign) \}
\end{align*}
where $\mathrm{finite}_{\seq y}(\formula,\struct,\assign) := | \{
\otherassign \in \assignset(\seq y)\ \mid (\struct, \otherassign \concatfunc \assign)\models \formula(\seq y\concatvar\seq z)\} | < \infty$.
\end{constructor}
\begin{example}
\label{example:DomPWTimers}
    The implicit rankings constructed in \Cref{example:CondRank} captures the decrease of rank of a scheduling timer for some thread $x$ if it holds the currently served ticket, 
    it thus has $x$ as a parameter. 
    If we aggregate over all such threads with $\mathrm{DomPW}(\mathrm{Cond}(\mathrm{Pos}(\timer_{\globally\eventually\mathrm{scheduled}(x)}(x),<),\mathrm{myt}(x)=\mathrm{serv}),x)$
    we consider the scheduling of all such threads. 
    This pattern is the primary way we utilize timers for implicit rankings, and so we introduce ``sugar'' for it: 
    \[\mathrm{TimerRank}(\temporalformula(\seq x),\formula(\seq x))=\mathrm{DomPW}(\mathrm{Cond}(\mathrm{Pos}(\timer_{\temporalformula}(\seq x),<),\formula(\seq x),\seq x).\]
\end{example}
The final constructor lifts an implicit ranking $\rankname^\insup$ with ranking range $(P,<_P)$ to an implicit ranking whose ranking range is the set of functions $Y\to P$ where $(Y,<_Y)$ is a possibly infinite well-founded set.
The set of functions is ordered by the \emph{reverse} lexicographic ordering:
$a <_{\mathrm{lex}} b$ if and only if for every $y\in Y$ such that $a(y) \nleq_P b(y)$ there exists $y_0$ such that $y<_Y y_0$ and $a(y_0) <_P b(y_0)$.
Well-foundedness of $<_Y$ along with finiteness of the set of non-minimal elements in $\rankname^\insup$ ensure well-foundedness of the reverse lexicographic ordering.
A reverse lexicographic ordering is crucial.
To see this, consider for example the standard lexicographic ordering on $\{0,1\}^\nat$,
which is not well-founded
since it includes the sequence
$(1,0,\ldots),(0,1,0,\ldots),(0,0,1,0,\ldots),\ldots$
that is infinitely decreasing.

\begin{constructor}
\label{constructor:domainLexicographic}
The \emph{domain-lexicographic constructor}   takes an implicit ranking 
$\rankname^\insup=\ranktuplesup{\insup}$
with parameters $\seq x =  y\concatvar\seq z$ and soundness condition $\conditionsuper{\insup}$,
and a relation $\orderformula( y_1, y_2)$ over $\signature$.
It returns an implicit ranking $\mathrm{DomLex}(\rankname^{\insup},y,\orderformula) = \ranktuple$ with parameters $\seq z$ and soundness condition $\condition$ defined by:
\begin{align*}
&{\reduced}(\seq z,\seq z') = {\conserved}(\seq {z}, \seq {z'}) \wedge
(\exists y. {\reducedsuper{\insup}}(  y \concatvar \seq {z}, y \concatvar \seq {z'} )) \\ 
&{\conserved}(\seq z,\seq z') = \immutorder(\orderformula)\wedge \forall y.  (\conservedsuper{\insup}(y\concatvar \seq {z},
y\concatvar \seq {z'})
\vee \exists  y_0. (
\orderformula(y,y_0)
\wedge
\reducedsuper{\insup}( {y_0}\concatvar \seq {z},
{y_0}\concatvar \seq {z'})
) 
) \\
&\minformula(\seq z) = \forall y. \minsuper{\insup}(y \concatvar \seq z)\\
&{\condition} = 
\left\{ 
    \struct \in \structset(\signature) 
    \left|
    \begin{array}{l}
    \struct\in \conditionsuper{\insup}, \orderformula \textnormal{ is well-founded in } \struct,
    \\
    \forall \assign\in\assignset(\seq z).\ \mathrm{finite}_{y}(\neg \minsuper{\insup}( y\concatvar\seq z),\struct,\assign)
    \end{array}
    \right.
\right\}
\end{align*}
\end{constructor}

\begin{example}
\label{example:lexArray}
Consider a program that takes as input an array of natural numbers $c[0,\ldots,n-1]$, and in each iteration, an index $i$ with $c[i] >0$ is chosen and the operations $c[i]:=c[i]-1; c[i-1]=*$ are performed.
This program can be shown to be terminating with the implicit ranking $\mathrm{DomLex}(\mathrm{Pos}(c[i],<_\nat),i,<_\nat)$.
\end{example}

\subsection{Validating Decrease of Rank and Soundness Conditions}
\label{subsection:smtEncoding}

We now revisit \Cref{theorem:terminationProofRule} and show how to verify its two premises
in order to prove termination. 
Fix a signature $\Sigma$, a transition system $\Tspec(\semantics)$ with $\Tspec = \specification$ and a closed implicit ranking $\reduced$ with soundness condition $\condition$.
We need to show that any reachable state 
of $\Tspec(\semantics)$ satisfies the soundness condition $\condition$ and any reachable transition 
satisfies $\reduced$.
To discharge the premises we utilize an inductive invariant $\inv$ as a way to overapproximate the reachable states, 
for which we need to verify that $\inv$ is satisfied in the initial states and
that $\inv$ is preserved by transitions. 
Then, for the second premise, we need to verify that transitions beginning in a state satisfying $\inv$ satisfy $\reduced$. These requirements are captured by the validity of the following verification conditions in first-order logic:
\begin{inparaenum}
\item[(i)] $\init \wedge \axioms \to \inv$
\item[(ii)] $\inv \wedge \axioms \wedge \tr \wedge \axioms' \to \inv'$
\item[(iii)] $\inv \wedge \axioms \wedge \tr \wedge \axioms' \to \reduced$.
\end{inparaenum}

As for the soundness condition $\condition$, verifying that a certain structure $\struct$ satisfies $\condition$ is not always possible in first-order logic.
This is because we use the soundness condition to encode those requirements of the ranking that are not definable in first-order logic.
Specifically, the soundness conditions introduced by our constructors are of two types: well-foundedness of a relation and finitneness of a set.
These notions are not first-order definable, so we cannot reduce their verification to first-order validity checks.
Instead, we employ two complementary strategies to verify the soundness condition: relying on the intended system semantics, or finding sufficient conditions that are expressible in first-order logic.

\emph{Strategy 1.}
In some cases the intended system semantics trivially implies the validity of certain soundness conditions:
\begin{inparaenum}
    \item A set of elements of a sort with finite-domain semantics is finite.
    \item A partial order on a sort with finite-domain semantics is well-founded.
    \item A partial order that is well-founded 
    in the intended semantics of the system, 
    for example, the $<$ relation on a sort with intended semantics of the extended natural numbers (as in timers), is well-founded.
\end{inparaenum}

\emph{Strategy 2.}
The second strategy is aimed at showing finiteness of a set of elements $S$ in all reachable states, even though the domain of the relevant sort is not finite.
The approach is based on~\cite{towards_liveness_proofs}, with some generalization.
The idea is that if there is $m\in \nat$ such that initially there are at most $m$ elements in $B$ and any transition adds at most $m$ elements to $B$ then by induction, $B$ is finite in all reachable states. 
This also implies that any set $A$ such that $A \subseteq B$ in all reachable states is finite. 
In the soundness conditions we consider, the set $A$ is parametrized by some $\seq z$ and for a given $\seq z$, it is defined as the set of assignments to $\seq y$ that satisfy a formula $\formula(\seq y\concatvar \seq z)$, where our goal is to verity finiteness of $A$ for any $\seq z$.
We therefore encode the aforementioned idea by considering a formula  $\formulaa(\seq y\concatvar \seq z)$ used as an inductively-finite over-approximation of $\formula(\seq y\concatvar \seq z)$. 
This results in the following verification conditions, outlined for $m=1$, for any $\seq z$:
\begin{inparaenum}
    \item[(i)] $\formula(\seq y\concatvar \seq z) \wedge \inv \wedge  \axioms\to \formulaa(\seq y\concatvar \seq z)$;
    \item[(ii)] $\init \wedge \inv \wedge  \axioms\to \forall \seq z \ \exists \seq y_0 \ \forall \seq y \ \formulaa(\seq y\concatvar \seq z)\to \seq y=\seq y_0$; and
    \item[(iii)] $\inv \wedge \axioms\wedge \tr \wedge \axioms' \to \forall \seq z \ \exists \seq y_0 \ \forall \seq y \ \formulaa'(\seq y\concatvar \seq z)\to \seq y=\seq y_0 \vee \formulaa(\seq y\concatvar \seq z)$.
\end{inparaenum}

\begin{example}
\label{example:soundness}
    Continuing with \Cref{example:DomPWTimers}, the soundness condition can be understood as the requirement that in all reachable states there are finitely many threads $x$ for which $\formula(x) = (\mathrm{myt}(x) = \mathrm{serv})$ holds. 
    If we take $\formulaa(x)=\neg(\mathrm{idle}(x))$ and an appropriate invariant $\inv$ we can easily verify the verification conditions of strategy 2 above with $m=1$, and thus validate the soundness condition.
\end{example}
\section{Implementation and Evaluation}
\label{section:evaluation}
\begin{table}[t]
\centering
\caption{Evaluation Results.
Time is given in seconds, Con. stands for the number of constructors in the implicit ranking, 
$\mathrm{Fin}$ stands for the number of inductively-finite approximations,
Inv. stands for the number of conjuncts in the invariant, $|\mathrm{Proof}|$ stands for the number of tokens in the entire proof and $|\mathrm{L2S\ Proof}|$ stands for the number of tokens in the entire proof with the liveness-to-safety reduction in the  Ivy file.
}
\scalebox{0.9}{
\bgroup
\def\arraystretch{1.5}
\setlength\tabcolsep{4pt}
\footnotesize
\begin{tabular}{| l | c | c | c | c | c | c | c | }
\hline
Example &
Source &
Time (s) &
Con. &
$\mathrm{Fin}$ &
Inv. &
$|\mathrm{Proof}|$ &
$|\mathrm{L2S\ Proof}|$ 
\\ \hline
Ticket (\Cref{section:motivatingExample}) &
\multirow{3}{*}{\cite{liveness_to_safety}} & 
2 & 6 & 2 & 20 & 256 & 407
\\\cline{1-1}\cline{3-8}
Paxos&&
101 & 11 & 1 & 20 & 238 & 521
\\\cline{1-1}\cline{3-8}
ABP &&
4 & 31 & 2 & 34 & 539 & 860
\\\hline
HRB-Correct&
\multirow{2}{*}{\cite{BerkovitsCAV19}}&  
135 & 7 & 0 & 6 & 99 & 446
\\\cline{1-1}\cline{3-8}
HRB-Relay&&
56 & 8 & 0 & 17 & 316 & 593
\\ 
\hline
Ackermann&
\cite{power_of_temporal_prophecy}&
0.8 & 5 & 2 & 9 & 211 & 748
\\ 
\hline
LexArray (\Cref{example:lexArray})& - &
0.08 & 4 & 1 & 2 & 30 & -
\\ 
\hline
TimestampedQueue&
\multirow{3}{*}{\cite{towards_liveness_proofs}} & 
0.5 & 6 & 1 & 2 & 73 & -
\\\cline{1-1}\cline{3-8}
CascadingQueue&
&
1.2 & 9 & 2 & 7 & 175 & -
\\\cline{1-1}\cline{3-8}
ReorderingQueue&
&
2.3 & 9 & 2 & 11 & 220 & -
\\ 
\hline
MutexRing&
\multirow{7}{*}{\cite{ImplicitRankings}}& 
0.5 & 8 & 0 & 5 & 93 & -
\\\cline{1-1}\cline{3-8}
LeaderRing&&  
0.3 & 8 & 1 & 4 & 76 & -
\\\cline{1-1}\cline{3-8}
ToyStabilization&& 
0.3 & 5  & 0 & 3 & 69 & -
\\\cline{1-1}\cline{3-8}
Dijkstra $k$-State &&
64.8 & 6 & 0 & 3 & 213 & -
\\\cline{1-1}\cline{3-8}
BinaryCounter&&
0.02 & 4 & 0 & 0 & 21 & -
\\\cline{1-1}\cline{3-8}
SAT-Backtrack&&
0.3 & 9 & 0 & 8 & 157 & -
\\\cline{1-1}\cline{3-8}
SAT-CDCL&&
5.3 & 8 & 0 & 8 & 148 & -
\\ 
\hline
\end{tabular}
\egroup}
\label{table:evaluationExamples}
\end{table}

We implemented our approach for verification of temporal properties with timers and implicit rankings in a deductive verification tool.
Our tool is implemented in Python, using the Z3 API~\cite{z3}, it is available at~\cite{timers_and_rankings_artifact}. 
It takes as input a specification of a transition system in first-order logic, along with a temporal property in FO-LTL, an implicit ranking given by a composition of constructors, an inductive invariant, and, when strategy 2 of \Cref{subsection:smtEncoding} is applied, also inductively-finite approximations, all of which can use the timer variables.
The tool computes the timer reduction of \Cref{subsection:reduction},
generates the verification conditions of \Cref{subsection:smtEncoding} and discharges them using the Z3 solver. When one of the verification conditions is violated, the tool produces a counterexample in the form of a transition of the augmented system. 
In the SMT queries, the timers sort $\timesort$ is encoded using interpreted integers (with $\infty$ encoded by $-1$). The combination of uninterpreted functions, quantifiers and integers is sometimes difficult for the solver. We therefore implement an optimization where for invariants that do not involve timers we allow to check inductiveness w.r.t.\ the original system without the timers.   
Additionally, similarly to~\cite{ImplicitRankings}, we allow hint terms for existential quantifiers which help in queries with quantifier alternations.


We evaluated our tool on a set of examples of transitions systems and their properties from previous works~\cite{ImplicitRankings,liveness_to_safety,power_of_temporal_prophecy,BerkovitsCAV19,towards_liveness_proofs}, with user-provided implicit rankings and invariants. 
\ifconference
We expand on all examples in \cite{todo_arxiv}.
\else 
We expand on all examples in \Cref{appendix:examples}.
\fi
We ran our tool using a Macbook Pro with an Apple M1 Pro CPU, Z3 version 4.13.3.

\paragraph{Results.}
\Cref{table:evaluationExamples} summarizes our experimental results. To give a sense of the complexity of the proofs, we report for each example the time required for validating the verification conditions, the number of constructors used in the implicit ranking (where TimerRank is counted as one), the number of inductively-finite approximations needed, and the number of conjuncts in the invariant.
We additionally report the total size of the proof, measured by the number of tokens (or terms). 
Where a proof by the liveness-to-safety method exists, we also report this same metric for the corresponding proof, taken from Ivy~\cite{ivy}.

For examples from~\cite{ImplicitRankings,towards_liveness_proofs}, 
our proofs are similar to the original ones, except that in those works specific liveness proof rules were used while we prove termination of the augmented system.
The modifications of the proofs to our general framework are relatively simple, converting premises of the aforementioned rule into instantiations of the $\mathrm{TimerRank}$ constructor.
For examples that were proven by the liveness-to-safety, our proofs are simpler, as they do not require reasoning about a finite abstraction or a monitored copy of the state.
This can be observed in the significant difference in proof size in some examples.
\section{Related Work}
\label{section:relatedWork}
We have presented an approach for verifying temporal properties of systems that operate over unbounded domains, modeled in first-order logic.
Many works are aimed at verifying similar systems and properties. 
Some works take a fully automatic approach that applies to restricted systems~\cite{LTL-falsification,liveness_parametrized_programs,regular_abstractions,automataProgramTermination,liveness_invisible,newApproachTermination,ranking_functions_size_change,transition_invariants,liveness_to_safety_via_implicit_abstraction}, notably these works do not apply to the setting of systems specified in first-order logic.  
Other works provide frameworks to design distributed systems and mechanically prove their liveness~\cite{
multi_grained_specifications,StructuralTemporalLogic,LIDO,LIDO-DAG,shipwright,ironfleet,bythos,ChordLiveness,anvil} 
and finally some works provide theoretical frameworks to analyze proof mechanisms for FO-LTL~\cite{firstOrderAutomata,first_order_buchi,power_of_temporal_proofs}.
We turn to discuss the most relevant works, which, similarly to our work, provide semi-automatic methods that rely on first-order logic solvers.

Our work is based on~\cite{ImplicitRankings} and extends it in two ways.
First, while the original work targets only response properties with parameterized fairness assumptions, our work targets any FO-LTL property.
Second, the domain-based aggregations of implicit rankings in the original work assume that the relevant domain is finite. 
In our work we allow domain-based aggregations on infinite domains, as long as the set of non-minimal elements of the aggregated rankings remains finite.

In~\cite{towards_liveness_proofs} relational rankings are used to verify temporal properties.
Relational rankings are a subset of implicit rankings that are lexicographic compositions of domain-pointwise rankings. 
Similarly to our work, this work also allows systems with infinite domains, and our strategy 2 in \Cref{subsection:smtEncoding} is inspired by their enforcement of finiteness.
Unlike our unified approach,~\cite{towards_liveness_proofs} considers different temporal properties by using different proof rules for different types of properties, which is more complex and less general.

The approach of~\cite{LVR} is to semi-automatically synthesize polynomial ranking functions from integer variables in the state of the system.
These variables can be from the specification itself, cardinalities of certain sets and fairness variables introduced by the user.
These fairness variables are similar to the timers we introduce in \Cref{definition:timerSystem}, and are the inspiration for timers, but they capture only specific types of fairness assumptions and are introduced manually, so they do not have formal guarantees.
Like us, they allow the user to denote certain sorts as finitely interpreted, but their treatment of finiteness is less complete.
While some examples are shared between the papers, they make some manual manipulations to fairness assumptions in non-trivial ways that significantly simplify the proofs.

In~\cite{liveness_to_safety,prophecy,power_of_temporal_prophecy} a liveness-to-safety reduction is employed to reduce the verification of temporal properties to safety verification.
The reduction performs a dynamic finite abstraction of the system.
One then proves acyclicity of the abstracted system by providing an invariant for a monitored system where the current state is compared to a saved state modulo the abstraction.
The liveness-to-safety reduction is sound but not complete, and requires employing temporal prophecy to extend its power.
Moreover, writing invariants for the augmented system may be more complicated 
and less intuitive
than giving implicit rankings.

\commentout{
Doesn't fit here, i added the citation in the beginning.\\
The reified time construction takes a formula in a temporal logic and translates into a formula in a ``pure'' logic that is equisatisfiable. For FO-LTL this was done in~\cite{power_of_temporal_proofs}, translating an FO-LTL formula $\varphi$ to a formula $\varphi_\timename$ in first-order logic with the theory of natural numbers such that $\varphi$ is satisfiable if and only if $\varphi$ is satisfiable under the intended semantics of natural numbers.
This reduction is sound and complete and was used to explore the power of proof systems for FO-LTL, defining weak completeness criterion, based on the resulting proof system for first-order logic with natural number semantics.
This approach has not been implemented and we beleive it is less useful for deductive proofs and automation.
\raz{im still not sure, maybe should be earlier.}
}

\paragraph{\bf Acknowledgement}
We thank the anonymous reviewers and Eden Frenkel for helpful comments.
The research leading to these results has received funding from the
European Research Council under the European Union's Horizon 2020 research and innovation programme (grant agreement No [759102-SVIS]).
This research was partially supported by the Israeli Science Foundation (ISF) grant No.\ 2117/23, by a research grant from the Center for New Scientists at the Weizmann Institute of Science and by a grant from the Azrieli Foundation.

\paragraph{\bf Data-Availability Statement}
The paper is accompanied by an artifact containing the software and data required to reproduce the results. 
A snapshot of the artifact corresponding to the version used in this paper is available at~\cite{timers_and_rankings_artifact}.
For reuse and future development, the source code is available at~\cite{timers_and_rankings_repo}.

\bibliography{misc/references}

\ifconference
\else
\newpage
\appendix
\section{Proofs}
\label{appendix:proofs}

\subsubsection{Proof of \Cref{lemma:timerTemporalEquivalence}}

\begin{proof}
Proof by induction on the structure of FO-LTL formulas.
For an atomic formula $p(\seq x)$, this follows from the axiom: $\axioms_p = \timer_p (\seq x) = 0 \leftrightarrow p$. 
For $\temporalformula = \eventually \temporalformulaa$: if $\hat{\struct}_i,\assign \models \timer_{\eventually \temporalformulaa} (\seq x) = 0$ then from $\axioms_{\eventually\temporalformulaa}$ the interpretation of $\timer_\temporalformulaa$ is $k<\infty$,
from the decrease of timers in transitions it follows that $\hat{\struct}_{i+1}$ interprets $\timer_\temporalformula$ to $k-1$; repeating this argument $k$ times gives us that
$\hat{\struct}_{i+k},\assign\models \timer_\temporalformulaa (\seq x) = 0$. 
Then from the induction hypothesis, $\hat{\pi}^{i+k},\assign\models\temporalformulaa$ and so $\hat{\pi}^i,\assign\models   {\eventually\temporalformulaa}$ from semantics of $\eventually$.
In the other direction, if $\hat{\pi}^i,\assign \models \eventually \temporalformulaa$ then there exists $k\in\nat$ such that $\hat{\pi}^{i+k},\assign\models \temporalformulaa$ so from induction hypothesis we have $\hat{\struct}_{i+k},\assign \models (\timer_{\temporalformulaa}=0)$, from $\axioms_{\eventually\temporalformulaa}$ we have $\hat{\struct}_{i+k},\assign\models (\timer_{\eventually \temporalformulaa}=0)$.
Now, we iteratively apply $\tr_{\eventually\temporalformulaa}$, and get that for all $j\leq i+k$ we have $\hat{\struct}_{j},\assign\models (\timer_{\eventually \temporalformulaa}=0)$, and for $j=i$ we get the desired result.
For $\temporalformula = \globally \temporalformulaa$: if $\hat{\struct}_i,\assign \models \timer_{\globally \temporalformulaa} (\seq x) = 0$ then from $\tr_{\globally\temporalformulaa}$ we have $\hat{\struct}_i,\assign \models \timer_{\temporalformulaa} (\seq x) = 0$ and $\hat{\struct}_{i+1},\assign \models t'_{\globally\temporalformulaa} (\seq x) = 0$, we can repeat this argument for $i+1$ etc., 
and get that for every $k\geq i$ we have $\hat{\struct}_i,\assign \models \timer_{\temporalformulaa} (\seq x) = 0$, 
then from induction hypothesis we get that $\hat{\pi}^k,\assign\models\temporalformulaa$ and $\hat{\pi}^i,\assign\models \globally\temporalformulaa$ from the semantics of $\globally$.
In the other direction, if $\hat{\pi}^i,\assign \models \globally \temporalformulaa$ then for every $k\in\nat$ we have $\hat{\pi}^{i+k},\assign\models \temporalformulaa$ so from induction hypothesis we have $\hat{\struct}_{i+k},\assign \models (\timer_{\temporalformulaa}=0)$.
Observe the value of  $\timer_{\neg \temporalformulaa}$ in $\hat{\struct}_i,\assign$,
if it is $k<\infty$ then from reduction of timers, for $\hat{\struct}_{i+k}$ we will have $(\timer_{\neg \temporalformulaa}=0)$, 
and by $\axioms_{\neg\temporalformulaa}$ we get $\timer_{\temporalformulaa}\neq 0$ contradicting the above.
Then we must have  $\hat{\struct}_i,\assign\models (\timer_{\neg\temporalformulaa}=\infty)$, then from $\axioms_{\globally\temporalformulaa}$, $\hat{\struct}_i,\assign\models (\timer_{\globally\temporalformulaa} =0)$ as required.
The inductive cases for the other operators are simple, and follow from first-order reasoning.
\qed
\end{proof}

\subsubsection{Proof of \Cref{lemma:traceExtension}}

\begin{proof}
To construct the trace $\hat\pi$ we need to extend each structure $\struct_i$ to a structure $\hat\struct$ over $\signature_\prop$ in a way that the timer specification and intended semantics are respected. 
First, for each structure $\struct_i$ extend it by interpreting $\timesort,\leq,0,\infty,-1$ to the extended natural numbers $\natinf$ with the usual order.
It remains to define the interpretation of the timer functions for each formula in $ \sub{\prop}$.
Let $\temporalformula\in \sub{\prop}$ with free variables $\seq x$ and a
structure $\hat{\struct}_i$, define the interpretation of $\timer_\temporalformula$ to be a function from the sorts of  $\seq x$ to $\timesort$ such that for any sequence of sorted elements 
$\seq d$ and assignment $\assign = \seq x\mapsto \seq d$, we have that $\timer_\temporalformula$ maps $\seq d$ to the the minimal $j \in \nat$ such that $ \pi^{i+j},\assign \models \temporalformula$ or $\infty$ if no such $j$ exists.
Because $\pi$ and $\hat{\pi}$ agree on symbols from $\signature$ we can use their satisfactions interchangeably.

It remains to show that the timer specification is satisfied by these interpretations. 
First, $\pi\models\prop$, it follows that for $\hat{\struct}_0$ we have $\timer_{\prop}=0$, satisfying $\init_\prop$.
Second, we show the axioms $\axioms_\prop$ on all states $\hat{\struct}_i$ which means $\axioms_\temporalformula$ is satisfied for each $\temporalformula\in\sub{\prop}$:
for atomic $p$, we have $\hat{\struct}_i\models \timer_p (\seq x) = 0$ if and only if then $\pi^i\models p$ if and only if $\hat{\struct}_i\models p$, it follows that $t_p (\seq x) = 0\leftrightarrow p$ is valid in all states $\hat{\struct}_i$.
For $\temporalformula = \eventually \temporalformulaa$: $\hat{\struct}_i\models (\timer_{\eventually \temporalformulaa}=0)$ if and only if $ \pi^i \models {\eventually \temporalformulaa} $ if and only if there exists $k\in \nat$ such that $\pi^{i+k}\models \temporalformulaa$ if and only if $\hat{\struct}_i \models \timer_\temporalformulaa < \infty$.
For $\temporalformula = \globally\temporalformulaa$, we have $\hat{\struct}_i \models (\timer_{\globally\temporalformulaa}=0)$ if and only if $\pi^i \models\globally \temporalformulaa$ if and only if for all $k \in \nat$ we have $ \pi^{i+k}\models \temporalformulaa $, if and only if for every $k\in \nat$ $\pi^{i+k}\nvDash \neg \temporalformulaa$ if and only $\hat{\struct}_i\models \timer_{\neg\temporalformulaa}<\infty$.
The satisfaction of $\axioms_\temporalformula$ for formulas constructed by the other operators are simple, and follow from first-order reasoning.
Finally, we wish to show that successive states of $\hat{\pi}$ satisfy $\tr_\prop$, for any subformula $\temporalformula$, if $\hat{\struct}_i\models  0 < \timer_\temporalformula <\infty$, let $j$ be the interpretation of $\timer_\temporalformula$ in $\hat{\struct}_i$, it is minimal such that $\pi^{i+j}\models \temporalformula$, and so $\timer_\temporalformula$ is interpreted in $\hat{\struct}$ to $j-1$, and so $(\hat\struct_i,\hat\struct_{i+1})\models \timer_\temporalformula'(\seq x) = \timer_\temporalformula(\seq x) -1$.
Moreover, if $\hat\struct_i$ interprets $\timer_\temporalformula(\seq x)$ to $\infty$ then $\pi^{i+j},\assign\nvDash \temporalformula$ for any $j$,
it follows that $\pi^{i+1+j},\assign\nvDash \temporalformula(\seq x)$
for any $j$, and so $\hat\struct_{i+1}$ interprets $\timer'_\temporalformula(\seq x)$ to $\infty$ as well, and so 
$(\hat\struct_i,\hat\struct_{i+1})\models \timer_\temporalformula'(\seq x) = \infty \to  \timer_\temporalformula(\seq x)= \infty$ as necessary.
For $\temporalformula = \eventually \temporalformulaa$, $\hat{\struct}_i\models (\timer_{\eventually\temporalformulaa}=0)$ if and only if there exists a minimal $k\in \nat$ such that $\pi^{i+k}\models \temporalformulaa$,  $k=0$ if and only if  
$\timer_\temporalformulaa (\seq x) = 0$, otherwise $k>0$ if and only if $\pi^{i+1}\models \eventually \temporalformulaa$ if and only if $\hat\struct_{i+1}\models (t'_{\eventually\temporalformulaa}=0)$, thus this scenario is equivalent to $(\hat\struct_i,\hat\struct_{i+1})\models \timer_\temporalformulaa (\seq x) = 0 \vee t'_{\eventually \temporalformulaa}$.
For $\temporalformula = \globally \temporalformulaa$, $\hat{\struct}_i\models (\timer_{\globally\temporalformulaa}=0)$ if and only if for every $k\in \nat$ we have $\pi^{i+k}\models \temporalformula$, if and only for $\pi^i\models\temporalformulaa$ and $\pi^{i+1}\models \globally \temporalformulaa$ if and only if $(\hat\struct_i,\hat\struct_{i+1})\models \timer_\temporalformulaa (\seq x) = 0 \wedge t'_{\globally \temporalformulaa}=0$.  
\qed
\end{proof}

\subsubsection{Proof of \Cref{theorem:reductionSoundComplete}}

\begin{proof}
Soundness: Assume $\Tspec(\semantics) \times \Tspec_{\neg \prop}(\timesemantics)$ terminates.
Let $\pi$ be a trace of $\Tspec(\semantics)$, we wish to show that $\pi \models \prop$, assume towards contradiction that $\pi \models \neg \prop$, 
let $\hat{\pi}$ be the extended trace of $\Tspec_{\neg \prop}(\timesemantics)$ from \Cref{lemma:traceExtension} such that $\hat{\pi}|_{\signature} = \pi$, it follows that $\hat{\pi}$ is a trace of $\Tspec(\semantics) \times \Tspec_{\neg \prop}(\timesemantics)$, contradicting the assumption that this system terminates.
Completeness: Assume $\Tspec(\semantics)$ satisfies $\prop$. 
Assume towards contradiction that $\hat{\pi}=\hat{\struct}_0,\hat{\struct}_1,\ldots$ is a trace of $\Tspec(\semantics) \times \Tspec_{\neg \prop}(\timesemantics)$. 
By the initial state axiom we have $\hat{\struct}_0 \models (\timer_{\neg \prop}=0)$ and so by \Cref{lemma:timerTemporalEquivalence} we have $\hat{\pi} \models \neg \prop$. 
Consider the projection $\pi=\hat{\pi}|_{\signature}$, this is necessarily a trace of $\Tspec(\semantics)$ and $\pi\models \neg\prop$, contradicting the assumption.
\qed
\end{proof}

\subsubsection{Proof of \Cref{theorem:terminationProofRule}}

\begin{proof}
    Assume towards contradiction that $\Tspec(\semantics)$ does not terminate, then there exists a trace $\pi=(s_i)_{i=1}^\infty$ over some shared domain $\domain$. 
    It follows that 
    $\reduced$ is a closed implicit ranking with soundness condition $\condition$, so for $\domain$ there exist a ranking range $(A,<)$ and a ranking function $f\colon \structset(\signature,\domain)\to A$.
    Observe the sequence $\left(f(s_i)\right)_{i=0}^\infty$, for every $i\in \nat$, $s_i$ is reachable, and $(s_i,s_{i+1})$ is a reachable transition, and so from assumption for all $i\in\nat$ we have $s_i \in\condition$ and $(s_{i},s_{i+1})\models\reduced$, from definition we have $f(s_{i+1})<f(s_i)$, because  
    $f(s_i)\in\{ f(\struct) \mid \struct\in \condition \}$, it follows that the sequence $\left(f(s_i)\right)_{i=0}^\infty$ is an infinitely decreasing sequence in a well-founded partial order, leading to a contradiction.
    \qed
\end{proof}

\subsubsection{Proof of \Cref{theorem:constructors} }

We prove soundness of \Cref{constructor:domainPointwise} and \Cref{constructor:domainLexicographic}, the soundness of the other constructors is similar to the soundness proofs in~\cite{ImplicitRankings}. 
Before proving soundness of the constructors we prove a claim that shows the underlying order is well-founded.
\begin{claim}
Given a well-founded set $(P,<_P)$ and a set $Y$, $(\mathrm{Fin}(Y\to P),<_{\mathrm{pw}})$, the set of functions from $Y$ to $P$  with finitely many non-minimal images, ordered by the pointwise ordering is well-founded.
\end{claim}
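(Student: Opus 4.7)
The plan is to prove the claim by contradiction. Suppose there is an infinite strictly decreasing chain $a_0 >_{\mathrm{pw}} a_1 >_{\mathrm{pw}} a_2 > \cdots$ in $\mathrm{Fin}(Y \to P)$. I will show that such a chain must eventually become constant, contradicting strict descent.

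The first step is to track the ``support'' sets $S_i := \{ y \in Y \mid a_i(y) \text{ is non-minimal in } (P,<_P) \}$. These are finite by the assumption on $\mathrm{Fin}(Y \to P)$, and I claim $S_{i+1} \subseteq S_i$. Indeed, from $a_{i+1} <_{\mathrm{pw}} a_i$ we have $a_{i+1}(y) \leq_P a_i(y)$ for every $y$; if $a_i(y)$ is already $<_P$-minimal, then $a_{i+1}(y) \leq_P a_i(y)$ forces $a_{i+1}(y) = a_i(y)$ (a strictly smaller element would violate minimality), so $a_{i+1}(y)$ is also minimal. Since each $S_i$ is a finite set and the sequence is weakly decreasing under inclusion, it must stabilize: there is some $N$ with $S_n = S_N =: S$ for all $n \geq N$.

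The second step handles the values coordinatewise from index $N$ onward. For each $y \notin S$, all $a_n(y)$ with $n \geq N$ are minimal and weakly descending in $P$, so they are all equal to $a_N(y)$ by the same minimality argument. For each $y \in S$, the sequence $(a_n(y))_{n \geq N}$ is weakly $<_P$-decreasing; by well-foundedness of $<_P$, the set $\{ a_n(y) \mid n \geq N \}$ has a $<_P$-minimal element $a_{n_y}(y)$, and weakly descending past that index forces $a_n(y) = a_{n_y}(y)$ for all $n \geq n_y$. Since $S$ is finite, setting $M := \max(\{N\} \cup \{ n_y \mid y \in S \})$ yields $a_n = a_M$ for all $n \geq M$, contradicting $a_{M+1} <_{\mathrm{pw}} a_M$.

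I do not expect a serious obstacle. The only subtlety to be careful with is that well-foundedness of $<_P$ only excludes infinite \emph{strictly} descending chains, not weakly descending ones; so at both the ``support'' stabilization step and the ``coordinatewise'' stabilization step one passes through a minimal-element argument rather than directly invoking stabilization of weakly descending sequences. This same observation (that a weakly $<_P$-descending sequence in a well-founded poset is eventually constant) is used twice in the argument and is the one nontrivial ingredient.
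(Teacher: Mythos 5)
Your proof is correct and rests on the same two ingredients as the paper's argument: finiteness of the set of non-minimal coordinates and well-foundedness of $(P,<_P)$, applied coordinatewise to an assumed infinite strictly decreasing chain in $(\mathrm{Fin}(Y\to P),<_{\mathrm{pw}})$. The only difference is cosmetic: the paper pigeonholes the witnesses of strict decrease into the finite non-minimal support of the first function to extract an infinite strict descent at a single recurring coordinate, whereas you show every coordinate eventually freezes (supports stabilize, then each of the finitely many active coordinates stabilizes by well-foundedness) so the chain becomes eventually constant, contradicting strictness --- both finishes are valid.
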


\begin{proof}
Assume towards contradiction that there is an infinite sequence $(f_i)_{i=0}^\infty$ such that for all $i$, $f_i\in \mathrm{Fin}(Y\to P)$ and $f_i >_{\mathrm{pw}} f_{i+1}$.
Let $S = \{ y\in Y \mid f_0(y) \text{ is non-minimal in }P\}$, define a sequence $(y_i)_{i=0}^\infty$ such that $f_{i+1}(y_i) <_P f_i(y_i)$.
It follows that for all $i\in \nat$ we have $y_i\in S $ as $f_{i+1}(y_i) <_P f_{i}(y_i) \leq_P f_0(y_i)$.
$S$ is finite, so there exists $y_\star$ such that it appears infinitely often in the sequence $(y_i)_{i=0}^\infty$ and so $f_{i+1}(y_\star) <_P f_i(y_\star)$ for infinitely many $i$.
Additionally, for any $i$ we have $f_{i+1}(y_\star) \leq f_i(y_\star)$ and so the sequence $(f_i(y_\star))_{i=0}^\infty$ contains an infinitely decreasing sequence in $(P,<_P)$ in contradiction to well-foundedness.
\qed
\end{proof}

\begin{claim}
    \Cref{constructor:domainPointwise} is sound.
\end{claim}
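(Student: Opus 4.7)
The plan is to lift the ranking data of $\rankname^\insup$ to functions indexed by $\seq y$-assignments, ordered pointwise. Fix a domain $\domain$, and let $(P,<_P)$ and $g \colon \structset(\signature,\domain) \times \assignset(\seq y\concatvar\seq z,\domain) \to P$ be a ranking range and ranking function supplied by $\rankname^\insup$ on $\domain$. Set $Y = \assignset(\seq y,\domain)$, let $A = Y \to P$ ordered by the pointwise order $<_{\mathrm{pw}}$, and define
$f\colon \structset(\signature,\domain) \times \assignset(\seq z,\domain) \to A$
by $f(\struct,\assign)(\otherassign) = g(\struct,\otherassign\concatfunc\assign)$ for every $\otherassign \in Y$. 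The task is then to verify the four conditions of \Cref{definition:implicitRanking} for $\ranktuple$ and $\condition$ with this data.

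Conditions~\ref{implicit-item1}, \ref{implicit-item2}, \ref{implicit-item3} unfold directly. If $\twopair \models \reduced$, the $\conserved$ conjunct combined with condition~\ref{implicit-item2} of $\rankname^\insup$ applied at every $\otherassign$ yields $f\pair(\otherassign) \geq_P f\pairlow(\otherassign)$ for all $\otherassign \in Y$, while the existential witness for $\reducedsuper{\insup}$ together with condition~\ref{implicit-item1} of $\rankname^\insup$ yields strict decrease at some $\otherassign$; hence $f\pair >_{\mathrm{pw}} f\pairlow$. Condition~\ref{implicit-item2} is handled identically, and condition~\ref{implicit-item3} follows because $\pair \models \minformula$ forces $g(\struct,\otherassign\concatfunc\assign)$ to be $<_P$-minimal for every $\otherassign$, which is exactly pointwise minimality in $A$.

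For the well-foundedness condition I invoke the preceding claim. For $\struct \in \condition$, the membership $\struct \in \conditionsuper{\insup}$ guarantees that $<_P$ restricted to the image $P^\star$ of $g$ on $(\conditionsuper{\insup},\assignset(\seq y\concatvar\seq z,\domain))$-pairs is well-founded; moreover $\mathrm{finite}_{\seq y}(\neg\minsuper{\insup}(\seq y\concatvar\seq z),\struct,\assign)$ together with condition~\ref{implicit-item3} of $\rankname^\insup$ ensures that all but finitely many $\otherassign \in Y$ send $f\pair(\otherassign)$ to a $<_P$-minimal element of $P^\star$, so $f\pair \in \mathrm{Fin}(Y\to P^\star)$. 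The preceding claim applied to $(P^\star,<_P{\restriction}_{P^\star})$ then yields well-foundedness of $<_{\mathrm{pw}}$ on this subset, which contains the image of $f$ on $\condition$-structures, as required.

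The main obstacle, that the pointwise order on $Y \to P$ is typically not well-founded when $Y$ is infinite, is precisely what the preceding claim defuses by restricting to functions with only finitely many non-minimal images; the rest of the argument is a routine unpacking of the definitions of $\reduced,\conserved,\minformula$ in terms of the $\rankname^\insup$-data. The only subtlety to watch during the writeup is to argue well-foundedness on $P^\star$ rather than on all of $P$, since the implicit-ranking hypothesis for $\rankname^\insup$ only supplies well-foundedness on that image.
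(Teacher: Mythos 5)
Your proposal is correct and follows essentially the same route as the paper's proof: the same lifting of the ranking function to $\assignset(\seq y,\domain)\to P$ with the pointwise order, the same unfolding of $\reduced,\conserved,\minformula$ against conditions \ref{implicit-item1}--\ref{implicit-item3} of $\rankname^\insup$, and the same appeal to the preceding claim after restricting to the image of the base ranking function on $\conditionsuper{\insup}$-structures (your $P^\star$, the paper's $P_\insup$) and using the finiteness part of $\condition$ together with the contrapositive of condition \ref{implicit-item3} to land in $\mathrm{Fin}(Y\to P^\star)$.
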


\begin{proof}
Given an implicit ranking
$\rankname^\insup$
with parameters $\seq x = y\concatvar\seq z$ and soundness condition $\conditionsuper{\insup}$, 
we need to show that $\mathrm{DomPW}(\rankname^{\insup},y)$ is an implicit ranking with parameters $\seq z$ and soundness condition $\condition$ as defined in \Cref{constructor:domainPointwise}.
For a domain $\domain$, let $A^\insup, f_\insup$ be a ranking range and ranking function, respectively, for $\rankname^\insup$ and $\domain$.
Define $A = \assignset(y,\domain) \to A^\insup$, ordered by $<_{\mathrm{pw}}$, and
$f\colon\structset(\signature,\domain)\times\assignset(\seq z,\domain)\to A$ by $
f(\pairhigh )=\lambda \otherassign \in \assignset(y,\domain). f_\insup(\struct, \otherassign \concatfunc \assign)
$.

Assume
$\twopair \allowbreak \models {\conserved}(\seq x,\seq x') = \forall {y}. \conservedsuper{\insup}(y \concatvar \seq {z}, y \concatvar \seq {z'})$.
Then for any assignment $\otherassign$ to $y$ we have $\twopair,\otherassign \models \conservedsuper{\insup}(y \concatvar \seq {z}, y \concatvar \seq {z'})$, which by our conventions is equivalent to $(\struct, \otherassign\concatfunc\assign),(\struct',\otherassign\concatfunc\assign') \models \conservedsuper{\insup}(\seq {y} \concatvar \seq {z}, \seq {y} \concatvar \seq {z'})$
by the definition of $\rankname^\insup$ we get 
$ f_\insup (\struct,\otherassign \concatfunc \assign)\geq_\insup f_\insup (\struct',\otherassign \concatfunc \assign')$. 
This holds for any $\otherassign\in \assignset(y,\domain)$ so it follows that: 
$ \lambda \otherassign. f_\insup(\struct,\otherassign\concatfunc\assign)\geq_{\text{pw}}\lambda \otherassign. f_\insup(\struct',\otherassign\concatfunc\assign') $ and so 
$ f\pairhigh\geq_{\text{pw}}f\pairlow $.
If we have additionally $\twopair \models (\exists y. {\reducedsuper{\insup}}( y \concatvar \seq {z}, y \concatvar \seq {z'} ))$ then there is $\otherassign\in\assignset(y,\domain)$ such that
$ f_\insup (\struct,\otherassign \concatfunc \assign) >_\insup f_\insup (\struct',\otherassign \concatfunc \assign') $ it follows that $f\pairhigh >_{\mathrm{pw}} f\pairlow $.

For minimality, assume $ \pair \models \minformula(\seq z)  = \forall y. \minsuper{\insup}(y \concatvar \seq z)$, then assume towards contradiction that $f\pair$ is not minimal, then there exists some function $g\in A$ such that $g <_{\text{pw}} f\pair$, so there is $\otherassign \in \assignset(y,\domain)$ such that $g(\otherassign) <_\insup f\pair(\otherassign) = f_\insup(\struct,\otherassign\concatfunc \assign)$, so $f_\insup(\struct,\otherassign\concatfunc \assign)$ is not minimal in $A^\insup$. But, $\pair,\otherassign \models \minsuper{\insup}(y\concatvar\seq z)$, so from definition we get $f(\struct,\otherassign\concatfunc\assign)$ is minimal in $A^\insup$, contradiction.

Finally, consider $<_{\text{pw}} \textnormal{restricted to } P=\{ f\pair \mid \struct\in \condition, \assign\in\assignset(\seq z,\domain) \}$ for ${\condition} = \{ 
\struct \in \structset(\signature) \mid  \struct\in 
\conditionsuper{\insup} \textnormal{ and } \forall v\in \assignset(\seq z). 
|\{u\in \assignset(y) \mid (\struct,u\concatfunc v)\models \neg \minsuper{{\insup}}(y\concatvar\seq z) \}| < \infty \}$.
From the recursive case the order $<_\insup$ is well-founded on the set
$P_\insup = \{f_\insup (\struct,\otherassign\concatfunc\assign) \mid \struct\in\conditionsuper{\insup}, \otherassign\concatfunc\assign\in \assignset(y\concatvar\seq z,\domain) \} $. 
From the above $\mathrm{Fin}(\assignset(y,\domain) \to P_\insup)$ is well-founded under $<_{\text{pw}}$, we will show $P$ is contained in it.
Let $\pairhigh$ such that $\struct\in\condition$ and $\assign\in\assignset(\seq z,\domain)$, we have $\struct\in\conditionsuper{\insup}$, so for any $\otherassign\in\assignset(y,\domain)$ we have $f_\insup(\struct,\otherassign\concatfunc\assign)\in P_\insup$, it remains to show that $f\pair$ has finitely many non-minimal outputs.
Indeed, by assumption $\struct\in\condition$, so for $\assign$ we have $|\{u\in \assignset(y) \mid (\struct,u\concatfunc v)\models \neg \minsuper{{\insup}}(y\concatvar\seq z) \}| < \infty$, 
for an assignment $\otherassign\in\assignset(y,\domain)$, if $f_\insup(\struct,\otherassign\concatfunc\assign)$ is not minimal then $(\struct,\otherassign\concatfunc\assign)\nvDash \minsuper{\insup}(y\concatvar\seq z)$, 
otherwise it would have to be minimal from the recursive case. It follows that $(\struct,\otherassign\concatfunc\assign)\models \neg\minsuper{\insup}(y\concatvar\seq z)$, 
so there are a finite number of inputs to  $f\pair$ such that $f\pair(\otherassign)$ is not minimal, and so $f\pair\in \mathrm{Fin}(\assignset(y,\domain) \to P_\insup)$.
\qed
\end{proof}

\begin{claim}
Given two well-founded sets $(Y,<_Y), (P,<_P)$, $(\mathrm{Fin}(Y\to P),<_{\mathrm{lex}})$, the set of functions from $Y$ to $P$  with finitely many non-minimal images, ordered by the reverse lexicographic ordering is well-founded.
\end{claim}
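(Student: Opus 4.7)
My plan is to prove well-foundedness by reducing to the Dershowitz--Manna multiset ordering on finite multisets over an auxiliary well-founded poset. Equip $Q = Y \times P$ with the lexicographic order $\sqsubset$ defined by $(y, p) \sqsubset (y', p')$ iff $y <_Y y'$, or $y = y'$ and $p <_P p'$; this is well-founded as the lex product of two well-founded orders. I encode each function $f \in \mathrm{Fin}(Y \to P)$ as the finite multiset $M(f) = \{(y, f(y)) \mid y \in S_f\}$ over $Q$, where $S_f$ is the (finite) set of $y$ at which $f(y)$ is non-minimal in $P$.

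The crux is to show that $f <_{\mathrm{lex}} g$ implies $M(f) <_{\mathrm{mul}} M(g)$ in the multiset ordering; once established, the classical well-foundedness of $<_{\mathrm{mul}}$ over a well-founded base order rules out any infinite strictly descending chain in $<_{\mathrm{lex}}$. To exhibit the required multiset decrease, I take the finite difference set $E = \{y \in S_f \cup S_g \mid f(y) \neq g(y)\}$ and define $X = \{(y, g(y)) \mid y \in E \cap S_g\} \subseteq M(g)$ together with $Y_X = \{(y, f(y)) \mid y \in E \cap S_f\} \subseteq M(f)$. Then $M(f) = (M(g) \setminus X) \uplus Y_X$ by construction (the remaining locations are where $f$ and $g$ agree), so it suffices to verify that $X$ is non-empty and that every element of $Y_X$ is $\sqsubset$ some element of $X$.

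Non-emptiness of $X$ follows by first observing that $f <_{\mathrm{lex}} g$ forces some $y^*$ with $f(y^*) <_P g(y^*)$ --- either directly from the reverse-lex condition when some violation $f(y) \nleq_P g(y)$ exists, or from $f \neq g$ together with pointwise $f \leq_P g$ --- so $g(y^*)$ is non-minimal and $(y^*, g(y^*)) \in X$. For the domination property, given $(y, f(y)) \in Y_X$: if $f(y) <_P g(y)$ then $(y, g(y)) \in X$ dominates it via the $P$-component of $\sqsubset$; otherwise $f(y) \nleq_P g(y)$ and the reverse-lex condition produces a witness $y' >_Y y$ with $f(y') <_P g(y')$, yielding $(y', g(y')) \in X$ that dominates $(y, f(y))$ via the $Y$-component. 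The main obstacle is the careful bookkeeping around the multiset decomposition, including verifying that locations where both $f$ and $g$ take possibly distinct minimal values of $P$ contribute to neither $M(f)$ nor $M(g)$ and so need no special treatment, and ensuring the reverse-lex condition is always strong enough to supply the required $\sqsubset$-witness above each troublesome $y$.
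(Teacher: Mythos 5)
Your proof is correct, but it takes a genuinely different route from the paper. You reduce the statement to the classical Dershowitz--Manna theorem: encode each $f\in\mathrm{Fin}(Y\to P)$ by the finite multiset $M(f)$ of pairs $(y,f(y))$ at non-minimal positions, order $Y\times P$ lexicographically (well-founded as the lex product of two well-founded orders), and prove the single monotonicity lemma $f<_{\mathrm{lex}} g \Rightarrow M(f)<_{\mathrm{mul}} M(g)$ via the decomposition $M(f)=(M(g)\setminus X)\uplus Y_X$; your case analysis for domination (same position with a $P$-decrease, or a $>_Y$-witness supplied by the reverse-lexicographic condition) is exactly what is needed, and your observation that positions where both functions take possibly distinct minimal values drop out of both multisets disposes of the only delicate bookkeeping. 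The paper instead re-proves well-foundedness from scratch by imitating the Manna--Dershowitz argument itself: from a hypothetical infinite $<_{\mathrm{lex}}$-descending chain it builds an infinite, finitely-branching tree whose edges strictly decrease in the lexicographic order on $Y\times P$, and applies K\H{o}nig's lemma to extract an infinite descending chain there, a contradiction. Your reduction buys modularity: the tree construction, the finite-branching bookkeeping and K\H{o}nig's lemma are all absorbed into a black-box citation of the multiset-ordering theorem, whereas the paper's argument is self-contained but must track the tree invariants explicitly. One small point worth noting: your non-emptiness argument for $X$ assumes that $f<_{\mathrm{lex}} g$ guarantees some position with a strict $P$-decrease (equivalently, that the relation is irreflexive); the definition displayed before the domain-lexicographic constructor omits this strictness clause, but it is clearly the intended reading (and is what the constructor's decrease formula enforces, and without it the claim would be false), so your assumption is the right one.
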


\begin{proof}
Our proof imitates the proof of the Manna Dershowitz theorem~\cite{MannaDershowitz}.
Assume towards contradiction that there is an infinite sequence $(f_i)_{i=0}^\infty$ such that for all $i$, $f_i\in \mathrm{Fin}(Y\to P)$ and $f_i >_{\mathrm{lex}} f_{i+1}$.
Construct an infinite tree by the following process: add a root $r$, for the first step add vertices $(y,f_0(y))$ for every $y$ such that $f_0(y)$ is not minimal, connect them all to $r$.
In the second step, add vertices $(y,f_1(y))$ for every $y\in Y$ such that
$f_1(y) \neq f_0(y)$.
If $f_1(y) <_P f_0(y)$, connect an edge $(y,f_0(y))\to (y,f_1(y))$.
Else, $f_1(y) \nleq_P f_0(y)$, there exists $x\in Y$ such that $y <_Y x$ and $f_1(x) <_P f_0(x)$, it follows that $f_0(x)$ is not minimal and so $(x,f_0(x))$ is in the graph, connect the edge $(x,f_0(x)) \to (y,f_1(y))$.
Repeat this construction for arbitrary $i$ and $i+1$. 
The construction satisfies the following properties: after step $i$ of the construction the leaves of the graph are exactly the nodes $(y,f_i(y))$ with non-minimal $f_i(y)$; every step adds only finitely many vertices; every step adds edges only for leaf vertices; every vertex has finite degree; every step adds at least one vertex, so the graph is infinite; every edges $(x,p)\to (y,q)$ satisfies $x>_Y y$ or $x = y \land p>_P q$. 
Now, because the graph is an infinite tree with finite degrees, we utilize K\H onig's infinity lemma which states that the graph has an infinite path. 
Denote the infinite path $v_0,v_1,v_2,\ldots$. 
Notice that for every $i$ we have an edge $v_i\to v_{i+1}$, and so $v_{i+1} <_{Y\times P} v_i$ where $<_{Y\times P}$ is a the standard lexicographic ordering on $Y\times P$.
Because $<_Y$ and $<_P$ are both well-founded it follows that $<_{Y\times P}$ is also well-founded, contradiction.
\qed
\end{proof}

\begin{claim}
    \Cref{constructor:domainLexicographic} is sound.
\end{claim}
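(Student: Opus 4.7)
The plan is to mirror the soundness proof of \Cref{constructor:domainPointwise}, substituting the reverse lexicographic ordering for the pointwise one, and to invoke the lemma just established that $(\mathrm{Fin}(Y\to P),<_{\mathrm{lex}})$ is well-founded whenever $(Y,<_Y)$ and $(P,<_P)$ are. For a domain $\domain$, let $(A^\insup,<_\insup)$ and $f_\insup$ be the ranking range and function supplied recursively by $\rankname^\insup$, and for each structure $\struct$ let $<_\struct$ be the strict partial order on $\assignset(y,\domain)$ induced by $\orderformula$ interpreted in $\struct$. I would take as ranking range the set $A$ of pairs $(g, <_Y)$ where $<_Y$ is a strict partial order on $\assignset(y,\domain)$ and $g\colon\assignset(y,\domain)\to A^\insup$, with the order $(g_1,<_{Y_1})<(g_2,<_{Y_2})$ iff $<_{Y_1}=<_{Y_2}$ and $g_1 <_{\mathrm{lex}} g_2$ under that common order (incomparable otherwise), and define $f(\struct,\assign)=(\lambda\otherassign.\, f_\insup(\struct,\otherassign\concatfunc\assign),\ <_\struct)$.

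The three items of \Cref{definition:implicitRanking} are then routine to unpack. For item~\ref{implicit-item2}, the conjunct $\immutorder(\orderformula)$ of $\conserved$ gives $<_\struct = <_{\struct'}$, so the two outputs of $f$ share their tag and are comparable; then for each $\otherassign$, either $\conservedsuper{\insup}$ supplies $f_\insup(\struct,\otherassign\concatfunc\assign)\geq_\insup f_\insup(\struct',\otherassign\concatfunc\assign')$, or $\reducedsuper{\insup}$ at some $\otherassign_0$ with $\orderformula(\otherassign,\otherassign_0)$ provides the witness demanded by the reverse-lex definition. Item~\ref{implicit-item1} follows by adding the $\exists y.\reducedsuper{\insup}$ witness, which makes the two images distinct and hence the comparison strict. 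Item~\ref{implicit-item3} is immediate: if $\minformula$ holds, every output of $\lambda\otherassign.\, f_\insup(\struct,\otherassign\concatfunc\assign)$ is minimal in $A^\insup$, and no function can be strictly below a constantly-minimal function in $<_{\mathrm{lex}}$.

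For well-foundedness restricted to the image, observe that distinct tags are incomparable, so any hypothetical infinite descending chain in the image lies inside a fixed copy of the reverse lexicographic order for some well-founded $<_\struct$. The recursive soundness condition $\conditionsuper{\insup}$ gives well-foundedness of $<_\insup$ on the image of $f_\insup$, and the finiteness clause of $\condition$ forces each $f(\struct,\assign)$ to have finitely many non-minimal values, putting the image inside $\mathrm{Fin}(\assignset(y,\domain)\to A^\insup)$ so that the preceding claim applies. The only real subtlety, and what I expect to be the main obstacle, is the per-structure interpretation of $\orderformula$: a single fixed partial order $(A,<)$ is demanded by the definition, but different structures in $\condition$ may interpret $\orderformula$ differently. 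Carrying $<_\struct$ as part of the element of $A$ and relying on $\immutorder(\orderformula)$ within $\reduced$ and $\conserved$ is what reconciles the two requirements.
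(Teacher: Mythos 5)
Your proposal is correct and follows essentially the same route as the paper's proof: both embed the interpretation of $\orderformula$ into the rank element (your tag $<_\struct$ is the paper's $\interp^\orderformula(\struct)$ component), make elements with different tags incomparable, rely on $\immutorder(\orderformula)$ in $\reduced,\conserved$ to keep the tag fixed across a transition, and reduce well-foundedness on the image to the auxiliary claim about $(\mathrm{Fin}(Y\to P),<_{\mathrm{lex}})$ using the recursive soundness condition, well-foundedness of $\orderformula$, and the finiteness clause of $\condition$. The only difference is cosmetic (restricting tags to partial orders up front versus building that requirement into the order on the ranking range, as the paper does).
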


\begin{proof}
Given an implicit ranking
$\rankname^\insup$,
parameters
$\seq x =  y\concatvar\seq z$ 
soundness condition 
$\conditionsuper{\insup}$,
and a relation $\orderformula( y_1, y_2)$ over $\signature$, 
we need to show that $\mathrm{DomLex}(\rankname^{\insup},y,\orderformula)$ is an implicit ranking with parameters $\seq z$ and soundness condition $\condition$ as defined in \Cref{constructor:domainLexicographic}.
For a domain $\domain$, let $A^\insup, f_\insup$ be a ranking range and ranking function, respectively, for $\rankname^\insup$ and $\domain$.
Define $\mathrm{interp}(\orderformula)=\mathcal{P}(\assignset(y_1\cdot y_2,\domain))$ which is the set of possible interpretation of the binary relation $\orderformula$.
For a structure $\struct$, define $\interp^\orderformula(\struct)\in \mathrm{interp}(\orderformula)$ to be the interpretation of $\orderformula$ in $\struct$. 
We define the ranking range of $\rankname$ by
$A = \mathrm{interp}(\orderformula)
\times \assignset(y,\domain) \to A^\insup$,
ordered such that: 
$(\interp^\orderformula_1,a_1) \leq (\interp^\orderformula_2,a_2)$ if and only if 
$I^\orderformula:=\interp^\orderformula_1=\interp^\orderformula_2$, $\interp^\orderformula$ is a partial order, and $a_1 \leq_{\mathrm{lex}} a_2$ by the lexicographic ordering on the set of functions according to $\interp^\orderformula$.
We then define the ranking function by:
$f\colon\structset(\signature,\domain)\times\assignset(\seq z,\domain)\to A$ by 
$f(\pair)=(\interp^\orderformula(s),\lambda \otherassign \in \assignset(y,\domain). f_\insup(\struct, \otherassign \concatfunc \assign))
$.

If
$\twopair \models {\conserved}(\seq x,\seq x') = \immutorder(\orderformula)\wedge \forall y.  (\conservedsuper{\insup}(y\concatvar \seq {z},
y\concatvar \seq {z'})
\vee \exists  y_0. (
\orderformula(y,y_0)
\wedge
\reducedsuper{\insup}( {y_0}\concatvar \seq {z},
{y_0}\concatvar \seq {z'})))$.
Then, in particular we have $\interp^\orderformula(\struct)=\interp^\orderformula(\struct')$ and it is a partial order.
Then for any assignment $\otherassign$ to $y$ we have $\twopair,\otherassign \models (\conservedsuper{\insup}(y\concatvar \seq {z},
y\concatvar \seq {z'})
\vee \exists  y_0. (
\orderformula(y,y_0)
\wedge
\reducedsuper{\insup}( {y_0}\concatvar \seq {z},
{y_0}\concatvar \seq {z'})))$, 
then, either 
$ f_\insup (\struct,\otherassign \concatfunc \assign)\geq_\insup f_\insup (\struct',\otherassign \concatfunc \assign')$
or there exists $\otherassign_0\in \assignset(y,\domain)$ such that $\otherassign\concatfunc\otherassign_0\in \interp^\orderformula(\struct)$ and
$f_\insup (\struct,\otherassign_0 \concatfunc \assign) >_\insup f_\insup (\struct',\otherassign_0 \concatfunc \assign')$.
This shows that $ \lambda \otherassign \in \assignset(y,\domain). f_\insup(\struct, \otherassign \concatfunc \assign) \geq_\mathrm{lex} \lambda \otherassign \in \assignset(y,\domain). f_\insup(\struct', \otherassign \concatfunc \assign') $ according to the lexicographic ordering defined by $\interp^\orderformula(\struct)$ and 
so $f\pairhigh \geq f\pairlow$ by the defined order above.
If we have additionally $\twopair \models (\exists y. {\reducedsuper{\insup}}( y \concatvar \seq {z}, y \concatvar \seq {z'} ))$ then by the same argument there is assignment $\otherassign\in\assignset(y,\domain)$ such that
$f_\insup (\struct,\otherassign \concatfunc \assign) >_\insup f_\insup (\struct',\otherassign \concatfunc \assign')$ it follows that $ f\pairhigh > f\pairlow$.
$\minformula$ is defined as for \Cref{constructor:domainPointwise} so the proof is the same.

Finally, we need to show that $P = \{f\pair \mid \struct\in \condition, v\in \assignset(\seq z,\domain) \}$ ordered by $\leq$ defined above is well-founded.
The ranking range $A$ is $\mathrm{interp}(\orderformula)\times\assignset(y,\domain) \to A^\insup$ where pairs in $\leq$ have the same interpretation $\interp^\orderformula$ for $\orderformula$.
It is easy to see that it suffices to show that the order defined on $\assignset(y,\domain) \to A^\insup$ for a states with fixed $\interp^\orderformula$.
In other words, we need to show that 
$\{ \lambda \otherassign \in \assignset(y,\domain). f_\insup(\struct, \otherassign \concatfunc \assign) \mid \struct\in \condition \textnormal{ with } \interp^\orderformula(\struct)=\interp^\orderformula, v\in \assignset(\seq z,\domain)  \}$
ordered by $<_{\mathrm{lex}}$ for fixed $\interp^\orderformula$ is well-founded.
Define $(Y=\assignset(y,\domain),<_Y=\interp^\orderformula)$,
for a state $\struct\in\condition$, we have that $\interp^\orderformula(\struct)$ is well-founded, so $<_Y$ is well-founded. 
Take $P = \{ f_\insup(\struct,\otherassign\concatvar\assign)\mid s\in \condition,\otherassign\concatvar\assign\in\assignset(\seq x,\domain) \} $ ordered by $<_P=<_\insup$.
For a state $\struct \in \condition$ we have $\struct\in\conditionsuper{\insup}$, so the order on $P$ is well-founded by the recursive case.
Finally, for any $\struct \in \condition$ and $\assign\in\assignset(\seq z,\domain)$ we have $\mathrm{finite}_{y}(\neg \minsuper{\insup}( y\concatvar\seq z),\struct,\assign)$ and so $ \lambda \otherassign \in \assignset(y,\domain). f_\insup(\struct, \otherassign \concatfunc \assign) \in \mathrm{Fin}(Y\to P)$.
Thus, from the claim above, this set is well-founded under the $<_{\mathrm{lex}}$ order.
\qed
\end{proof}

\section{Other Constructors}
\label{appendix:constructors}
We present the other four constructors we use in our implementation that were not presented in \Cref{subsection:constructors}, these are all taken from~\cite{ImplicitRankings}, with the expected minimal formulas and soundness conditions.
\begin{constructor}
\label{constructor:binary}
The \emph{binary constructor} takes a formula $\formula(\seq x)$ over $\signature$. It returns an implicit ranking $\mathrm{Bin}(\formula)=\ranktuple$ with parameters $\seq x$ and soundness condition $\condition$ defined by:
\begin{align*}
    &{\reduced}(\seq {x},\seq {x'}) = \formula(\seq {x})
    \wedge \neg \formula'(\seq {x'}) 
    \qquad \conserved(\seq {x},\seq {x'}) = 
    \neg \formula(\seq {x})\to \neg \formula'(\seq {x'})\\
    &\minformula(\seq {x}) = \neg \formula(\seq x)
    \qquad\qquad\qquad \condition=\structset(\signature)
\end{align*}
\end{constructor}

\begin{constructor}
\label{constructor:pointwise}
The \emph{pointwise constructor} takes, for $i=1,\ldots,m$,
implicit rankings $\rankname^i=\ranktuplesup{i}$, each with parameters $\seq x$ and soundness condition $\conditionsuper{i}$.
It returns an implicit ranking $\mathrm{PW}(\rankname^1,\ldots,\rankname^m) = \ranktuple$ with parameters $\seq x$ and soundness condition $\condition$ defined by:
\begin{align*}
 &{\reduced}(\seq x,\seq x')=  \conserved(\seq x,\seq x') \wedge\bigvee_i
\reducedsuper{i}(\seq x,\seq x') 
\quad \conserved(\seq x,\seq x')=\bigwedge_i \conservedsuper{i}(\seq x,\seq x') \\
& \minformula(\seq x)=\bigwedge_{i} \minsuper{i}(\seq x)
\qquad\qquad\qquad\qquad\qquad \condition =\bigcap_i 
\conditionsuper{i}
\end{align*}
\end{constructor}

\begin{constructor}
\label{constructor:lexicographic}
The \emph{lexicographic constructor} 
takes, for $i=1,\ldots,m$,
implicit rankings $\rankname^i=\ranktuplesup{i}$, each with parameters $\seq x$ and soundness condition $\conditionsuper{i}$.
It returns an implicit ranking $\mathrm{Lex}(\rankname^1,\ldots,\rankname^m) = \ranktuple$ with parameters $\seq x$ and soundness condition $\condition$ defined by:
\begin{align*}
 &{\reduced}(\seq x,\seq x')= \bigvee_i
(\reducedsuper{i}(\seq x,\seq x') \wedge \bigwedge_{j<i} 
 \conservedsuper{j}(\seq x,\seq x'))
\quad \conserved(\seq x,\seq x')={\reduced}(\seq x,\seq x') \vee \bigwedge_i \conservedsuper{i}(\seq x,\seq x') \\
& \minformula(\seq x)=\bigwedge_{i} \minsuper{i}(\seq x)
\qquad\qquad\qquad\qquad\qquad \condition =\bigcap_i 
\conditionsuper{i}
\end{align*}
\end{constructor}

\begin{constructor}
The \emph{domain permuted-pointwise constructor} \ifshort receives \else takes \fi an
implicit ranking $\rankname^\insup=\ranktuplesup{\insup}$
with parameters $\seq x = \seq y\concatvar\seq z$ and soundness condition $\condition$, and $k\in\nat$.
It returns an implicit ranking $\mathrm{DomPerm}(\rankname^\insup,\seq y,k)=\ranktuple$ with parameters $\seq z$ and soundness condition $\condition$ defined by:
\begin{align*}
     &\reduced(\seq z,\seq z') = \tilde \exists\bijec. \ \forall \seq {y}. \ \conservedsuper{\insup}(\seq y \concatvar \seq z, \seq {y_\bijec} \concatvar \seq z' ) \wedge 
    \exists \seq {y}. \ \reducedsuper{\insup}(\seq y \concatvar \seq z, \seq {y_\bijec} \concatvar \seq z' )\\
         &\conserved(\seq z,\seq z') = \tilde \exists\bijec. \
    \forall \seq {y}. \ \conservedsuper{\insup}(\seq y \concatvar \seq z, \seq {y_\bijec} \concatvar \seq z' ) \qquad  \minformula(\seq z) = \forall \seq y. \minsuper{\insup}(\seq y \concatvar \seq z)\\
&{\condition} = \{  \struct \in \structset(\signature) \mid  \struct\in \conditionsuper{\insup} 
\textnormal{ and } \forall \assign\in\assignset(\seq z).\ \mathrm{finite}_{\seq y}(\neg \minsuper{\insup}(\seq y\concatvar\seq z),\struct,\assign) \}
\end{align*}
where:
     $\seq y_\bijec = \mathrm{ite}(\seq y = \seq {y^1_{\scriptstyle\rightarrow}},\
\seq {y^1_{\scriptstyle\leftarrow}}, \
\mathrm{ite}(\seq y = \seq {y^1_{\scriptstyle\leftarrow}},\
\seq {y^1_{\scriptstyle\rightarrow}} , \ 
\ldots,$\\
\rightline{
    $\qquad \ \ \mathrm{ite}(\seq y = \seq {y^k_{\scriptstyle\rightarrow}}, \ \seq {y^k_{\scriptstyle\leftarrow}}, \ \mathrm{ite}(\seq y = \seq {y^k_{\scriptstyle\leftarrow}}, \ \seq {y^k_{\scriptstyle\rightarrow}} , \ \seq y))))$
} 
     $\mathrm{ \tilde \exists\bijec. \ \formula } := \exists \seq {y^1_{\scriptstyle\rightarrow}},\seq {y^1_{\scriptstyle\leftarrow}},\ldots,\seq {y^k_{\scriptstyle\rightarrow}},\seq {y^k_{\scriptstyle\leftarrow}}. \bigwedge_{i<j} ({\seq y^i_{\scriptstyle\rightarrow}}\neq {\seq y^j_{\scriptstyle\rightarrow}}\wedge {\seq y^i_{\scriptstyle\leftarrow}}\neq {\seq y^j_{\scriptstyle\leftarrow}}
     \wedge
     {\seq y^i_{\scriptstyle\rightarrow}}\neq {\seq y^j_{\scriptstyle\leftarrow}}
     )\wedge \formula$ 
\noindent
\end{constructor}

\section{Example Descriptions}
\label{appendix:examples}

We provide some additional information on our evaluation examples. 
For every example, we give a short description of the system and its temporal property.

\paragraph{Paxos.} The Paxos consensus protocol~\cite{PaxosLamport} allows nodes to propose values and vote on them over rounds (ballots) in order to reach a consensus over an agreed value. 
The temporal property we verify is that under the assumptions that there is some maximal round $r_0$, after which no node proposes values\raz{?}, there is some quorum of nodes that is responsive and a value is proposed in $r_0$ then eventually a quorum of nodes agrees on that value.

\paragraph{ABP.}
The alternating bit protocol~\cite{alternating_bit_protocol} is a protocol for message passing between a sender and a receiver over lossy FIFO channels.
The sender repeatedly sends data messages along with the corresponding data and the sender's bit, the receiver repeatedly sends ack messages with the receiver's bit. 
Upon receiving a message with a different bit from its own the receiver writes down the data and flips its bit. 
Similarly, upon receiving a message with a disagreeing bit the sender flips its bit, moves on to the next data entry.
The temporal property we verify is that under weak fairness of the communication channels with unbounded message drops, all data from the sender will eventually be received by the receiver.

\paragraph{HRB-Correct \& HRB-Relay} The Hybrid Reliable Broadcast Protocol~\cite{hrb_source} allows nodes to broadcast a message under hybrid reliability guarantees.
Nodes have 4 failure modes: symmetric omission faulty, asymmetric omission faulty, symmetry faulty or arbitrary faulty. 
Under appropriate assumptions on the number of failures of every mode, nodes obedient to the protocol can only accept messages that were initiated to the network.
We verify two temporal properties.
Correctness: if a message was sent to all non-faulty nodes in the network eventually some obedient node accepts it. 
Relay: if some obedient node accepts a message then eventually all non-faulty nodes will.
Both properties are verified under assumption non-faulty nodes that received the initial message eventually send it, and non-faulty nodes that are sent a message eventually receive it.

\paragraph{Ackermann.}
The Ackermann function~\cite{ackermann1928} is a fast-growing recursive function defined by $A(0,n)=n+1, A(m,0)=A(m-1,1), A(m,n)=A(m-1,A(m,n-1))$. 
The transition system we consider is a stack implementation of the Ackermann function~\cite{iterative_ackermann}.
The temporal property we verify is termination of the transition system, which is equivalent to totality of the Ackermann function.

\paragraph{Queue Examples.}
We prove the three motivating examples from~\cite{towards_liveness_proofs}. 
In each example we prove the liveness property that any sent message is eventually received.
In TimestampedQueue, a stream of timestamped messages is passed over a FIFO queue, with a fairness assumption that the queue is polled infinitely often. 
In CascadingQueue, such a stream is passed between two consecutive queues, with the assumption that both queues are polled infinitely often.
In ReorderingQueue, two sending queues merge into a single arrival queue with the assumption that each is polled inifnitely often.

\paragraph{MutexRing.} An algorithm for mutual exclusion for nodes positioned in a ring, where access to the critical section is governed by possession of a token (from~\cite{liveness_invisible}).
Nodes can be in states: neutral, waiting or critical, may only move from neutral to critical if they hold the token,
and may only move from neutral to waiting if they do not hold the token. 
Upon exiting the critical section nodes pass the token to their right-hand neighbor.
We verify mutual exclusion under fair scheduling.

\paragraph{LeaderRing.}
The Leader Election in Ring protocol~\cite{chang_roberts}. 
A set of nodes is ordered in ring, each with a unique ID.
Every node, initially sends its ID, then waits to receive IDs from its left neighbor.
Upon receing an ID, a node passes that ID to its right neighbor if it is greater than its own ID.
A node may declare itself leader if it receives its own ID.
The temporal property we verify is that under fair scheduling of all nodes eventually a leader gets elected.

\paragraph{ToyStabilization.}
A simplified version of Dijsktra's $k$-state protocol that captures only the movement of privileges (see the motivating example in~\cite{ImplicitRankings}).  
We verify that eventually any node is scheduled. 

\paragraph{Dijsktra k-State.}
Dijkstra's $k$-state self-stabilization protocol~\cite{dijkstra_self_stab}. 
The state of the protocol is given by a ring of $n$ nodes, where each holds a value in $\{1,\ldots,k\}$ for some $k > n$.
Each node may be either privileged or not, defined according to a relation between its own value and its left-hand neighbor's value.
At each step, some privileged node takes a step with the effect of losing its privilege and potentially creating a privilege for the right-hand neighbor.
We verify that under any scheduling of privileged nodes, the system eventually moves to a stable state where exactly one node is privileged.
The proof is split into three temporal properties, based on~\cite{regular_abstractions}.

\paragraph{BinaryCounter.} A binary counter implemented in an array, decreasing from $2^n-1$ to $0$, implemented such that in every operation only one cell is modified.
We verify the program terminates.

\paragraph{SAT-Backtrack.} 
A deterministic backtracking algorithm for solving propositional SAT. 
The algorithm sets every variable, in a fixed order, first to true, and then after backtracking due to some unsatisfied clause, to false. 
If the current assignment satisfies all clauses it returns SAT, otherwise backtracks.
Upon backtracking from the first variable already set to false, it returns UNSAT.
We verify that the algorithm terminates.

\paragraph{SAT-CDCL.}
The CDCL algorithm for propositional SAT~\cite{CSP_SAT}.
The algorithm non-deterministically applies decisions on variable assignments, unit propagations, backjumping upon contradictions and learning of implied clauses.
We verify that the algorithm terminates, returning SAT or UNSAT.
\fi

\end{document}